\newcommand{\overbar}[1]{\mkern 1.5mu\overline{\mkern-1.5mu#1\mkern-1.5mu}\mkern 1.5mu}
\newtheorem{definition}{Definition}
\newtheorem{proposition}{Proposition}
\newcommand{\gtheta}{g_{_\theta}\hspace{-0.05cm}}
\newcommand{\gthetahat}{g_{_{\widehat{\theta}}}\hspace{-0.02cm}}
\newcommand{\btheta}{b_{_\theta}\hspace{-0.05cm}}
\newcommand{\gthetaxy}{\gtheta(x,z)}
\newcommand{\Etilde}{\widetilde{E}_{_{\theta,T}}}
\newcommand{\Etheta}{E_{_\theta}\hspace{-0.05cm}}
\newcommand{\psitheta}{\psi_{_\theta}\hspace{-0.05cm}}
\newcommand{\phithetaj}{r_{j}\hspace{-0.02cm}}
\newcommand{\phithetatildej}{\widetilde{r}_{j}\hspace{-0.02cm}}
\newcommand{\psithetatildej}{s_{j}\hspace{-0.02cm}}
\newcommand{\bxy}{\btheta(x,z)}
\newcommand{\gthetak}{\gtheta \bigl(x_k,\nut\bigl)}
\newcommand{\klist}{k=1,\dots,K}
\newcommand{\mdotx}{\dot{m}_{_{\theta}}\hspace{-0.05cm}(x)}
\newcommand{\ellthetat}{\ell_{t}\hspace{-0.02cm}}
\newcommand{\mtheta}{m_{_{\theta}}\hspace{-0.05cm}(x_k)}
\newcommand{\mthetax}{m_{_{\theta}}\hspace{-0.05cm}(x)}
\newcommand{\mtildex}{\widetilde{m}_{_{\theta,T}}\hspace{-0.05cm}(x)}
\newcommand{\mtildexk}{\widetilde{m}_{_{\theta,T}}\hspace{-0.05cm}(x_k)}
\newcommand{\mthetahat}{m_{_{\widehat{\theta}}}\hspace{-0.05cm}(x_k)}
\newcommand{\mutheta}{\mu_{_{\theta\hspace{-0.03cm},\hspace{-0.04cm}K}}\hspace{-0.05cm}}
\newcommand{\muk}{\mu_{_K}\hspace{-0.05cm}}
\newcommand{\NT}{N_{_{T}}{\hspace{-0.05cm}}}
\newcommand{\observed}{\nut}
\newcommand{\parallela}{\scalebox{0.5}{$\slash\!\slash$}}
\newcommand{\gparallel}{\gtheta^{\parallela}}
\newcommand{\gparallelhat}{\gthetahat^{\parallela}}
\newcommand{\sumk}{\sum_{k=1}^K}
\newcommand{\thetahat}{\widehat{\theta}}
\newcommand{\Real}{\mathbb{R}}
\newcommand{\NN}{\mathbb{N}}
\newcommand{\Vtilde}{\widetilde{V}_{_{\theta,T}}}
\newcommand{\vtheta}{v_{_{\theta\hspace{-0.03cm},\hspace{-0.04cm}K}}\hspace{-0.05cm}}
\newcommand{\vthetahat}{v_{_{\widehat{\theta}\hspace{-0.03cm},\hspace{-0.04cm}K}}\hspace{-0.05cm}}
\newcommand{\X}{\mathcal{X}}
\def\bea{\begin{eqnarray}}
\def\eea{\end{eqnarray}}
\def\beas{\begin{eqnarray*}}
\def\eeas{\end{eqnarray*}}
\def\Deltax{\Delta[x_k]}
\newcommand{\nut}{\nu(x_k)}
\newcommand{\nutx}{\nu(x)}
\newcommand{\wh}{\widehat}
\definecolor{darkred2}{HTML}{8B0000}
\newcommand{\sara}[1]{{\color{black}#1}}
\title[On the statistical analysis of grouped data]{On the statistical analysis of grouped data: when Pearson $\chi^2$ and other divisible statistics are not goodness-of-fit}
\author[Algeri S. and Khmaladze E.V.]{Sara Algeri$^1$ and Estate V. Khmaladze$^2$}
\address{$^1$ School of Statistics, University of Minnesota,  Minneapolis, MN, USA.\\ Email: salgeri@umn.edu}
\address{$^2$ School of Mathematics and Statistics, Victoria University of Wellington, Wellington, New Zealand. \\Email: estate.khmaladze@vuw.ac.nz}
\begin{document}

\begin{abstract}
Thousands of experiments are analyzed, and papers are published each year involving the statistical analysis of grouped data. While this area of statistics is often perceived – somewhat naively – as saturated, several misconceptions still affect everyday practice, and new frontiers have so far remained unexplored. Researchers must be aware of the limitations affecting their analyses and what new possibilities are at their hands.\\
The article introduces a unifying approach to the analysis of divisible statistics – that includes Pearson’s $\chi^2$, the likelihood ratio and spectral statistics, as special cases –  when a statistician deals with a large number of bins/groups, thus leading to a large number of small or moderate frequencies. Performance of the tests is analyzed against the class of contiguous (local) alternatives. \\
Perhaps the most surprising result here is that, in this `sparse' regime, most of the tests proposed in the literature can be modified to produce more powerful tests, and no single test based on a divisible statistic leads to a goodness-of-fit test. Distribution-free goodness-of-fit tests are also constructed.
\end{abstract}
\keywords{Goodness-of-fit, divisible statistics, grouped data, counting experiments, large number of small groups, chi-square statistics, spectral statistics, Poisson regression, empirical processes, distribution-free tests}

\section{Introduction}
\label{sec:introduction}
The rise of goodness-of-fit in statistics can be traced back to a correspondence between Edgeworth and Pearson at the end of the 19th century  \citep[cf.][]{stigler} on testing regarding the multinomial distribution of the 37 numbers on the roulette wheel, which led to Pearson's seminal paper on the $\chi^2$-test \citep{pearson1900}. The latter can be considered one of the pillars of data analysis and, similarly to other foundational tools, it is still dominant in everyday practice.

Given $K$ mutually exclusive groups or cells, labeled by $x_k$, let $\nut$ be the frequencies observed in each group.
Pearson's $\chi^2$ aims to test the hypothesis
\begin{equation}
    \label{eqn:test}
   H_0:E[\nut] =\mtheta \quad \text{ for all $k\in\{1,\dots,K\}$}
\end{equation}
where the expected frequencies, $\mtheta$, reflect our understanding of the phenomenon under study and may depend on a  --  possibly unknown  --  parameter $\theta\in \Real^p$. The  prescribed test statistic is
\[\sumk\frac{\bigl(\nut-\mtheta\bigl)^2}{\mtheta},\]
and it is  known to be $\chi^2$ distributed under $H_0$ if  $\min_{x_k}\{\mtheta\}\rightarrow \infty$.

Following Pearson's result, many different statistics have been introduced in the literature as substitutes to the $\chi^2$-test or supplementary to it \citep[cf.,][]{cochran}.
They can be classified into two main kinds: those requiring the data to be in a grouped form and those that apply to continuous data. 

\emph{\textbf{Theoretical unification in the analysis of continuous data.}} An important feature of many goodness-of-fit statistics for continuous data is that they can be specified as functionals of the \emph{empirical process}. 
Letting $y_1,\dots,y_n$ be i.i.d. realizations of a real-valued random variable $Y$, distributed according to the distribution $F$,  the empirical process  $w_n(y)$ is such that  
\[w_n(y)=\sqrt{n}[F_n(y)-F(y)],\]
with $F_n(y)=\frac{1}{n}\sum_{i=1}^n\mathds{1}_{\{y_i\leq y\}}$ being the empirical distribution function.  A prominent example of a goodness-of-fit test based on $w_n(y)$ is the Kolmogorov-Smirnov statistic \citep[][]{kolmogorov,smirnov} 
\sara{
\[\sup_y |w_n|.\]}
Also very prominent are the Cramer-von Mises \citep[][]{smirnov37} and Anderson-Darling \citep[][]{anderson} statistics given by
\[\int w^2_n(y)  dF(y) \qquad\text{and} \qquad\bigintsss\hspace{-0.1cm} \frac{w^2_n(y)}{F(y)(1-F(y))} dF(y),
\]
respectively. Several other constructs used in the analysis of continuous data can be expressed as linear functionals of $w_n(y)$  \citep[e.g.,][]{wellnerrev}.  Such unification has led to notable advancements in different areas of statistics and has enabled developments in the theory and practice of goodness-of-fit, which continue at present
\citep[e.g.,][]{henze,khm16,moscovich, durio,algeri22,dumbgen,roberts}. 

\emph{\textbf{An attempt of unification in the analysis of grouped data.}} Alternatives to Pearson in the analysis of grouped data include the likelihood ratio
\[\sumk\nut\ln\frac{\mtheta}{\nut},\]
the linear statistic
\[\sumk\bigl[\nut-\mtheta\bigl],\]
and the (cumulative) spectral statistic
\[\sumk\mathds{1}_{\{\observed\leq q\}}\quad \text{with $q \in \NN$,}\]
which plays a central role in occupancy and diversity problems \citep[e.g.,][]{magurran,mirakhmedov,barbour,khm11}.
The commonality of these and many other tests is that they can all be expressed as the sum of a function of the observed and expected frequencies. They have been referred to in the literature
as `divisible statistics' (or sometimes `separable statistics') \citep[cf.][]{medvedev70}. 

Given a function $g(z,t)$, usually differentiable in $t$, a divisible statistic based on such a function has the form
\begin{equation}
\label{eqn:classical_ds}
    \frac{1}{\sqrt{K}}\sumk g\bigl(\nut,\mtheta\bigl).
    \end{equation}
The study of the class of divisible statistics as such began in the last quarter of the 20th century
\citep[e.g.,][]{ivchenko79,ivchenko81, khm84, mnatsakanov86,mnatsakanov88}.
Of particular interest is their behavior in the high-dimensional regime, also explored in recent literature on minimax optimal tests \citep[e.g.,][]{balakrishnan2,balakrishnan1,cai}, in which the number of cells is not fixed and grows with the size of the sample.

Formally, letting $T=\sumk \mtheta$ be the expected sample size, we assume that $K$ and $T$ increase simultaneously so that
\begin{equation}
\label{eqn:A2}
\frac{T}{K} \to c \in (0,\infty),\quad\text{ as $T\rightarrow\infty$ and $K\rightarrow\infty$,}
\end{equation}
and $c$ can be interpreted as the average number of expected events per cell. 
Note that, differently from the case in which $T$ grows at a faster rate than $K$,  by assuming  $T$ and $K$ have the same order of magnitude,  we allow the frequencies to be potentially small and, thus, not necessarily Gaussian in the limit. 
The condition in \eqref{eqn:A2} is necessary 
 to describe many commonly encountered phenomena in which replicate observations are available, but only in a limited number. For example, 
 in Edgeworth and Pearson's epistolary communication, the roulette test relied on the normal approximation of the frequencies.  This is sensible since the underlying distribution has 36 dimensions. But if we were to ask the same question today considering, instead of a roulette wheel, a poker machine with about 20 images on 15 wheels, this would give us $20^{15}\sim 10^{19}$ dimensions or cells, which could not be easily filled in any real experiment.  
 In the statistical analysis of large texts, the asymptotic described by \eqref{eqn:A2}  is practically unavoidable: in a text composed of several hundred thousand words, more than $90\%$  of the words that constitute its vocabulary are used no more than $10$ times each, and about $50\%$ of the words are used only once \citep[e.g.,][]{baayen}. In sociological surveys, the diversity of opinions is studied through questionnaires, and the situation is similar: in a questionnaire of, say, $15$ binary questions, the number of different possible opinions, or groups, is $2^{15}\sim 33,000$. In a sample of $100,000$ respondents, about $78\%$ of different opinions will be expressed no more than $10$ times; in balanced questionnaires, the probability of the answer `yes' is close to $1/2$ and $c$ will be very close to $3$ or $4$ \citep[e.g.,][]{khm11}. The asymptotics in \eqref{eqn:A2} are also typical in occupation problems in environmental ecology. In this field, most models involve a large number of different species populating a certain region. When an observational experiment is conducted over a given period of time, the number of different species observed is also large, but individuals from the same species may only be observed a handful of times \citep[e.g.,][]{magurran}. 
As discussed in detail below, the assumption in \eqref{eqn:A2} is also relevant in counting experiments in physics and astronomy. 
In these and many other instances, researchers have so far been forced to group their data in large groups to use classical asymptotic theory in which $c$ increases. This manuscript demonstrates that this grouping is not necessary. An asymptotic theory under \eqref{eqn:A2} very much exists, although it is not the same theory as in the case in which the frequencies are accumulated.

\sara{The existing literature on the class of divisible statistics focuses on goodness-of-fit for simple hypotheses. One of the most notable results under \eqref{eqn:A2} is that, if $\theta$ is known, no single divisible statistic can detect all local departures from the null, somewhat diminishing their adequacy for goodness-of-fit (as per Definition \ref{def:gof_propetry} in Section \ref{sec:alternatives}). However, such alternatives can be detected with non-zero limiting power when combining many divisible statistics, e.g., by considering functionals of partial sums \citep[cf.][]{khm84} (see also Section \ref{sec:recovering_GOF}).
}  The effect of parameter estimation, on the other hand, has only been investigated on a case-by-case basis \citep[cf.,][]{cressie} or for sub-classes such as that of power divergence statistics \citep{cressie_read,muller03}.  
No parametric result on the class of divisible statistics as a whole has yet been introduced, nor has an analog to the empirical process been proposed for the analysis of grouped data.

\emph{\textbf{Divisible statistics anew: statistical motivation and relevance to the physical sciences.}}  The primary theoretical motivation of this manuscript is the need to unify the theory of statistical inference for grouped data to an extent comparable to what is enabled by the theory of empirical processes in the continuous regime. 
In Section \ref{sec:divisible}, we show that it is possible to express different statistics involved in the analysis of grouped data as linear functionals based on the same random measure. This construction not only brings the desired unification but also unveils new phenomena \sara{that have no analog} in the analysis of i.i.d. observations. For example, a somewhat unanticipated result is presented in Section \ref{sec:linear}:  in a sparse regime, all tests based on divisible statistics are dominated by weighted linear statistics. 

\begin{figure}[!h]
    \centering
        \includegraphics[width=130mm]{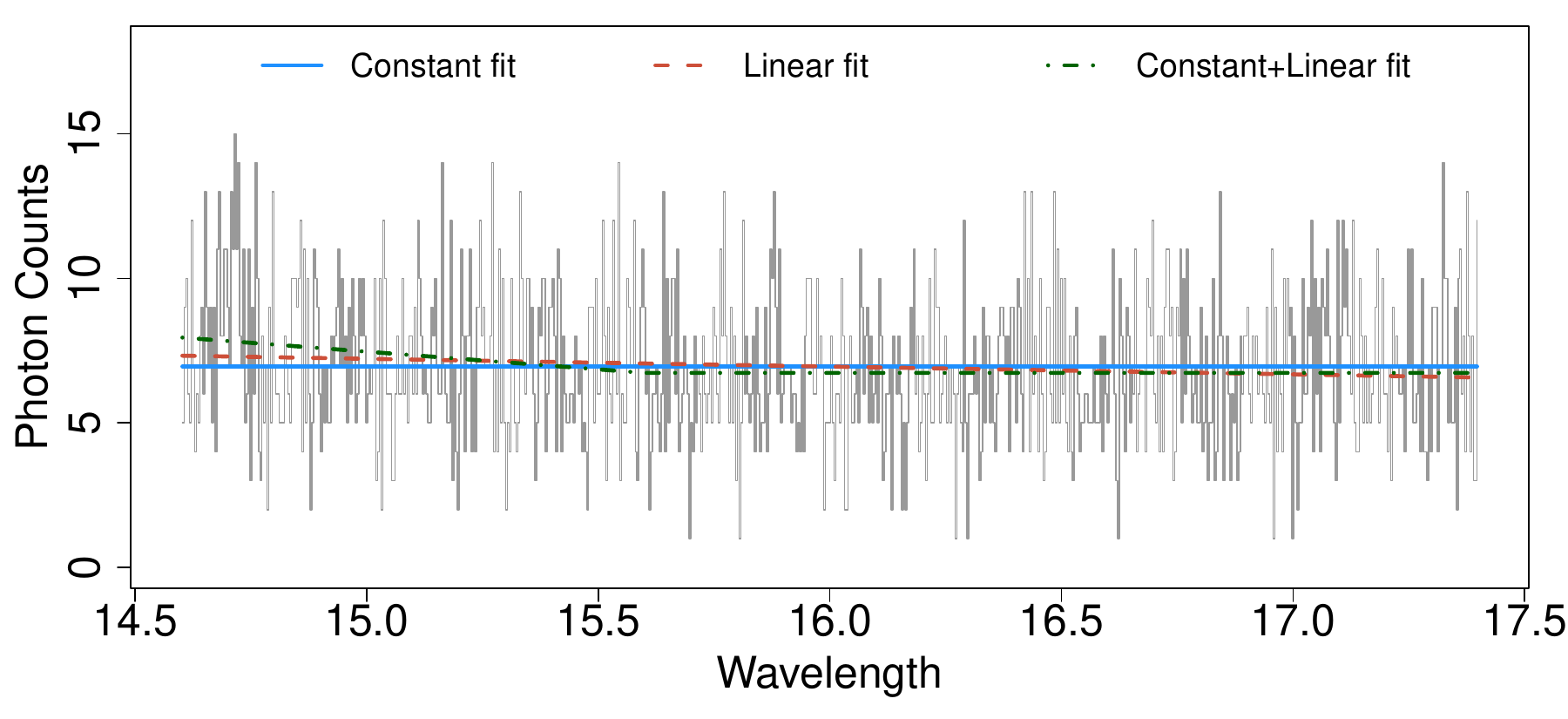}     
    \caption{$X$-ray source-free spectrum from a Chandra observation and discretized in $750$ bins. The blue solid, red dashed, and green chained lines are, respectively, the best fits obtained for constant, linear, and piecewise linear mean functions. 
    }
    \label{fig:RTCru_spectrum}
\end{figure} 
In what follows, we assume the observed frequencies are independent Poisson random variables and  \eqref{eqn:A2} holds  --  that is, asymptotically, the frequencies remain Poisson and do not reach the Gaussian limit. We also assume that, given a compact subset of $\X$  of $\Real^d$ with Lebesgue measure $|\X|$, the grouping leads to an exhaustive partition of $\X$ into bins.

These assumptions are relevant, for example, in physics and astronomy, where the detector mechanism often leads to a discretization of the area under study. As a result, the data consists of event counts observed in the bins  --  be it channels, pixels, voxels, or other segments  --  of the discretized space. \sara{Even} when the detector resolution is so high that it leads to a nearly continuous data stream,   the statistical analysis is often conducted by partitioning the data into bins. This choice may be dictated by the need for fast computations and numerical stability when fitting the models under study, to improve the signal-to-noise ratio, and/or to simplify the model assessment stage. 

\sara{
In high-energy physics, for example, data from the Large Hadron Collider are often aggregated into discrete energy or mass bins and employed for searches for new physics beyond the Standard Model; of particular interest are thinly populated high-energy bins in which the new phenomena are often expected to appear \citep[e.g.,][]{belforte}. In astroparticle physics, the Fermi Large Area Telescope \citep{atwood} detects gamma rays from pulsars,  active galactic nuclei, and supernova remnants. Individual photon events are binned into three-dimensional cubes, corresponding to spatial and energy bins. In all these settings, the data consists of Poisson-distributed photon counts observed over many bins. 
}
Hence, the goal is to estimate and validate the hypothesized mean function of a Poisson process defined over a discretized space.

Extensions to the multinomial scheme or situations in which the cells are defined by categorical variables are also possible in the proposed setup.

\subsection{An analysis of data from the Chandra $X$-ray observatory} 
\label{sec:chandra_intro}
In this manuscript, we consider data from the Chandra $X$-ray observatory to demonstrate the practical implications of our findings in scientific research and illustrate their implementation.

The Chandra $X$-ray Observatory is an $X$-ray telescope orbiting the Earth since its launch in 1999. It captures and probes $X$-rays from astronomical sources over a wide wavelength range. The analysis of $X$-ray spectra obtained with Chandra provides information about the chemical composition, temperature, distance from the Earth, and other features of astronomical sources.    
Below, we consider an $X$-ray spectrum relevant to the study of RT Cru, a star belonging to a rare class of $X$-ray-emitting symbiotic systems. The origin of these systems remains poorly understood, making RT Cru a target of significant astronomical interest. 

While previous studies made important contributions in the study of RT Cru \citep[e.g.,][]{luna,vinay}, many questions have remained unaddressed due to substantial background contamination in the $15-20\r{A}$ (Angstroms) range. In astronomy, the term `background' refers to the combined signal of all astrophysical sources, which are not those we aim to study.
When analyzing data with a significant background component, accurate modeling of its distribution is crucial. For RT Cru, \citet[][]{zhang} conducted a comprehensive statistical analysis to validate and recalibrate proposed background models using `source-free spectra' -- that is, data collected in an off-source region of the detector. Our present analysis examines a subset of the data used in their study; it consists of a source-free spectrum in the $14.6-17.4\r{A}$ range. 
\sara{Given the high resolution of the detector, the original dataset is unbinned; however, since the study of high-resolution $X$-ray spectra is typically conducted by binning the data  \citep[e.g.,][]{luna,karovska2010,bonamente}, we choose to discretize the spectrum in 750 bins. As shown in Fig. \ref{fig:RTCru_spectrum}, the bins average approximately seven counts, with many containing fewer than five counts. Such a choice allows us to demonstrate how the methods proposed in this manuscript can be used to test models for both low-resolution and sparsely binned high-resolution spectra while warning practitioners about their limitations. }

For the data considered, \citet[][]{zhang} provided strong evidence that the hypothesized uniform background model should be rejected, and a linear correction for it was recommended. \sara{The present manuscript shows that many classical goodness-of-fit statistics for grouped data fail to detect departures from uniformity. It establishes the theoretical foundations underlying the general nature of the phenomenon. It also proposes new tools to overcome this deficiency. }

\section{Modeling framework}
\label{sec:modeling}
To formally define the quantities introduced in Section \ref{sec:introduction},  consider a Poisson process $\NT(A),A\subseteq \X, $ with mean 
\begin{equation}
\label{eqn:LambdaT}
E[\NT(A)]=T\Lambda_\beta (A)= T \int_A \lambda_\beta(x) dx,
\end{equation}
\sara{
where the parameter $T$  determines the size of the sample  --  that is, in our context, `large samples' means $T\rightarrow \infty$. The function $\lambda_\beta$  describes the spread of events over $\X$; it depends on $\beta\in \Real^{p-1}$ but not on $T$. We assume $\lambda_\beta$ is piecewise continuous,  bounded away from zero, and, without loss of generality, we can further assume it is a probability density function in $x$, for all $\beta\in \Real^{p-1}$.}

\sara{In practice,   $\beta$ is often unknown or unspecified. Moreover,  from a technical standpoint, it is convenient to treat  $T$ as an unknown parameter to be estimated, along with $\beta$. Since $T$ may grow without limit, we estimate it in relative terms by considering $c\sim \frac{T}{K}$.  Therefore,  $\theta=(c,\beta)^{\sf T}\in \Theta\subseteq \Real^{p}$ is the focus of estimation, and we assume $\lambda_\beta$ satisfies the usual local asymptotic normality (LAN) requirement  (see Definition 7.14 in \cite[][p.104]{vandervaart})}.

Let $\X$ be a finite union of  $d$-dimensional hypercubes. When the region of interest is not rectangular but merely a bounded set in $\Real^d$, we can cover it with a finite union of cubes, $\mathcal{R}$, and assume $\lambda_\beta(x) = 0$ for all $x \in \mathcal{R} \setminus \X$.
In other words, we can construct a grid on $\X$ composed of $K$ disjoint squares, or bins, denoted by $\{\Deltax\}_{k=1}^K$ and centered at points $x_k$. Assume all the bins have the same volume
\[\delta=\frac{|\X|}{K}.\] 

\sara{
The observed frequencies  correspond to the increments of the process $\NT$ over each bin, i.e.,
\[\nut=\NT(\Deltax),\quad \text{for $\klist$}, \]
where the dependence of $\nut$ on $T$  is dropped to simplify the notation.  The collection of the frequencies $\{\nu(x_k)\}_{k=1}^K$, which, from some point of view, one would
like to call a Poisson brush, is an unusual object. Indeed, $\nu(x_k)$ counts the number of jump
points of $N_T$ in the bin $\Delta[x_k]$; yet, as the bin shrinks, these points do not vanish because $T$ increases at the same time; but where do the limiting jump points live? The answer, we believe, is associated with the notion of fold-up derivatives of shrinking sets and the local Poisson processes (cf.
\cite{khm2007},  \cite{khm2008}). However, we will not digress in this direction here.}

\sara{
The expectation of each $\nut$ is:
\begin{equation}
    \label{eqn:asympt_eq}
    \mtheta=T\int_{\Deltax}\hspace{-0.25cm}\lambda_\beta(x) dx\sim T\delta \lambda_\beta(x_k)\sim c|\X| \lambda_\beta(x_k),
\end{equation}
in which the first \sara{asymptotic} equivalence follows from the Lebesgue differentiation theorem \citep{rudin87} and the second follows from  \eqref{eqn:A2}. Let $m_{_\theta}$ be the stepwise constant function defined as
 \[\mthetax=\mtheta\quad\text{for all $x\in\Deltax$},\]
converging to $c|\X|\lambda_\beta(x)$ by \eqref{eqn:asympt_eq}. Hereinafter, we denote with $\Etheta[\cdot]$  the expectation taken with respect to a Poisson random variable, $\nutx$, with mean $\mthetax$, and we let $P(\cdot|\mthetax)$ be its distribution.}

\section{Divisible statistics as linear functionals based on the same random measure} 
\label{sec:divisible}
\sara{
As described in Section \ref{sec:introduction}, classical goodness-of-fit statistics for testing  \eqref{eqn:test} fall under the class of divisible statistics defined as in \eqref{eqn:classical_ds}. This definition, however, does not include other important statistics encountered in the analysis of binned data, especially in the context of estimation.
We consider a more general class of divisible statistics with summands of the form
$\gtheta(x_k,\nut)=g(x_k,\nut,\mtheta)$ where each term may explicitly depend  on the variable $x_k$. Hence, $\gtheta(x,z)$ constitutes the `inhomogenous' version of $g(z,\mthetax)$.  For example, when different weights are assigned to different bins, we may choose
 \begin{equation}
 \label{eqn:omegag}
\gtheta(x,z)=\omega(x)g(z,\mthetax),
\end{equation}
leading back to the `homogeneous' divisible statistics  in \eqref{eqn:classical_ds}  when $\omega(x)=1$. 

Denote with $\mutheta$ the measure:
$$\mutheta(A,z) = \frac{1}{K}\sumk \mathds{1}_{\{x_k\in A\}}P(z| \mtheta),$$ 
and use as supply of functions $\gtheta$   the Hilbert space:
\begin{align*}\mathcal{L}_2 (\mutheta) &= \{ \gtheta:  \Etheta[ \gtheta(x,\nutx)]=0\text{ for all }x\in \mathcal{X}, \quad \| \gtheta\|<\infty\}\\
\text{with}\quad  \|\gtheta\|&=\sqrt{\langle \gtheta,\gtheta\rangle}\quad\text{and}\quad \langle \gtheta,\gtheta'\rangle = \int \hspace{-0.2cm} \int \hspace{-0.05cm} \gthetaxy \gtheta'(x,z){\mutheta}(dx,dz).
\end{align*} 
To bring unification to the theory, we propose a revision of the traditional definition of divisible statistics to include any statistic that can be expressed as a linear functional from the same random measure. 
\begin{definition}
\label{def:divisible}
Given a function $\gtheta\in \mathcal{L}_2 (\mutheta)$, a divisible statistic defined by $\gtheta$  is
\begin{equation}
\label{eqn:divisible_stat_general}
 \vtheta(\gtheta)=\int \hspace{-0.2cm} \int \hspace{-0.05cm} \gthetaxy \vtheta(dx, dz)=\frac{1}{\sqrt{K}}\sumk \gthetak
\end{equation}
where $\vtheta$ is the random measure
\begin{equation}\label{eqn:vktheta} 
\vtheta(A,z)=\frac{1}{\sqrt{K}}\sumk\mathds{1}_{\{x_k\in A\}}\hspace{-0.05cm}\Bigl[\mathds{1}_{\{\observed \le z\}}-P(z | \mtheta)\Bigl],\quad A\subseteq \X,z\in \NN.
\end{equation}
\end{definition}
The integral in \eqref{eqn:divisible_stat_general} is a random Stieltjes integral, and $\vtheta(\gtheta)$, as a function of $\gtheta$, is a direct analog of the function-parametric empirical process \citep[e.g.,][Ch 19]{vandervaart}.
For instance, estimating equations (to be introduced in Section \ref{sec:estimators}) are divisible statistics specifying as in \eqref{eqn:divisible_stat_general} equated to zero, and $\gtheta$ can be chosen as in \eqref{eqn:omegag}, with $\omega(x)$ depending on $\theta$. 
Partial sums can be constructed similarly by choosing $\omega(x)=\mathds{1}_{\{x\in A\}}$.  
For fixed $A$ and $z$,  $\vtheta(A,z)$ is itself a divisible statistic, the localised (and centered) cumulative spectral statistic.  It counts the number of bins in  $A$ with frequencies $\nut$ less than or equal to $z$ while centering the random terms,  $\mathds{1}_{\{\observed \le z\}}$,  by their expectation. 
}

\sara{
 To better understand the connection between the random measure $\vtheta$ and the  classical empirical process for i.i.d. observations, rewrite \eqref{eqn:vktheta}  as  
\begin{equation*}
\label{eqn:divisible_stat_xi2}
\vtheta(A,z) =\sqrt{K} \biggl[ \frac{1}{K} \sumk\mathds{1}_{\{x_k\in A\}}\mathds{1}_{\{\nut \le z\}}- \mutheta(A,z)\biggl]. 
\end{equation*}
The first average in $k$ serves a purpose similar to that of the empirical distribution function.
The expected value of such an empirical distribution, given the collection of $x_k$, is the measure $\mutheta(A,z)$.
As $K\rightarrow \infty$, the  marginal $\muk(A)=\muk(A,\infty)$ converges to  $\mu(A)=|A|/|\X|$, the normalized Lebesgue measure, or uniform distribution, on $\X$ and $\mutheta(A,z)$ converges weakly to   
\begin{equation}
\label{eqn:mutheta}
\mu_{_\theta}\hspace{-0.05cm}(A,z) =  \frac{1}{|\X|} \int_A P(z | \mthetax ) dx.
\end{equation}
For example,  let
\begin{equation}
\label{eqn:sigma2}
\sigma^2_{\gtheta}=\int \hspace{-0.1cm}\Etheta\bigl[\gtheta^2(x,\nutx)\bigl] \mu(dx);
\end{equation}
as $K\to\infty$, we have
\begin{align*}
||\gtheta||^2=\int \hspace{-0.2cm}\int \hspace{-0.1cm}\gtheta^2(x,z) \mutheta(dx,dz)=\hspace{-0.1cm}\int \hspace{-0.1cm}\Etheta\bigl[\gtheta^2(x,\nutx)\bigl] \muk(dx)\to \hspace{-0.1cm}\int \hspace{-0.2cm} \int \hspace{-0.05cm} \gtheta^2(x,z) \mu_{_\theta}\hspace{-0.05cm}(dx, dz)=\sigma^2_{\gtheta}.
\end{align*}}

\sara{
The stochastic representation in \eqref{eqn:divisible_stat_general} unfolds the intrinsic nature of all divisible statistics: no matter how non-linear they are in $\nu(x_k)$, they are all linear functionals of the random measure $\vtheta$. 
Seemingly different statistics, such as Pearson's $\chi^2$ statistic and the spectral statistic, can be expressed as the values of the same function-parametric process $\vtheta(\gtheta)$. They are, therefore, not so different and are subject to the same treatment. 

The representation in \eqref{eqn:divisible_stat_general} and the central limit theorem together provide the Gaussian limiting distribution of any divisible statistics. As a reference, we formalize such a statement in Proposition \ref{prop:gaussian}.
\begin{proposition}
\label{prop:gaussian}
If $\gtheta$ is Riemann integrable in $x$ and $m_{_\theta}$, and $\sigma^2_{\gtheta}<\infty$, 
then, under \eqref{eqn:A2} and if $H_0$ is true,  $$\vtheta(\gtheta)\xrightarrow[]{d} N(0,\sigma_{\gtheta}^2).$$
\end{proposition}
For the centered Pearson's statistic
\begin{equation}
\label{eqn:pearson}
\frac{1}{\sqrt{K}}\sumk \biggl[\frac{(\nut-\mthetax)^2}{\mthetax}-1\bigg]
\end{equation}
Proposition \ref{prop:gaussian} implies that, in the asymptotic regime described by \eqref{eqn:A2},  \sara{the usual asymptotic normality of the $\chi^2$ distribution with a large number of degrees of freedom} no longer applies: in the limit,  \eqref{eqn:pearson} is normally distributed with zero mean under $H_0$, and variance $\|\gtheta\|^2=2+\int \frac{1}{\mthetax}\mu(dx)$.}

\section{Divisible statistics with estimated parameters} 
\label{sec:estimators}
When testing the validity of a given model $m_{_\theta}$ without a prescribed value of $\theta$,  one would need to replace the latter with an estimator. Therefore, it is useful to study how parameter estimation affects the structure of divisible statistics.

Let $\btheta$ be a $p$-dimensional vector function  with linearly independent components in ${\mathcal L}_2(\mutheta)$ and consider the system of estimating equations
\begin{equation}
\label{eqn:general_est_eq}
\vthetahat(b_{_{\widehat{\theta}}}\hspace{-0.02cm}) = 0.
\end{equation}
For example, when using Maximum Likelihood Estimation (MLE), $\btheta$ is equal to the score function
\begin{equation*}
\label{eqn:score2}
\psitheta(x,z)=\frac{\partial}{\partial \theta}  \ln p(z | \mthetax)=
\frac{\mdotx}{\mthetax}(z - \mthetax),
\end{equation*}
with $p(z| t)$ denoting the Poisson probability mass function with rate $t$ and $\mdotx=\frac{\partial}{\partial \theta}\mthetax$. When estimating $\theta$ via least squares, the minimization
\[ \thetahat=\arg\min_{\theta}\sumk(\nut-\mtheta)^2,\]
leads to
$\bxy=-2\mdotx(z-\mthetax)$.

\sara{Hereafter, we assume that the following expansion is valid for  $\gtheta\in {\mathcal L}_2 (\mutheta)$:
\begin{align}
\label{eqn:linearize_g00}
\vthetahat(\gthetahat) = \vtheta(\gtheta) - \sqrt{K}(\wh\theta - \theta)^{\sf T}\langle \gtheta,\psitheta\rangle+ o_{_P}(1)
\end{align}
Sufficient conditions for the validity of \eqref{eqn:linearize_g00}, along with the proof, are provided in the Supplementary Material. Such conditions are direct analogues of the classical regularity conditions used in the asymptotic theory of statistical inference for parametric families \citep[][Ch.33]{cramer}, \citep[][ Ch.5]{vandervaart}.

Proposition \ref{prop:projection} below uncovers the existence of a projection operator behind most routine parameter estimation procedures. Such a projection simplifies our understanding of the asymptotic behaviour of $\vthetahat(\gthetahat)$ and opens up less familiar aspects of this behaviour. 
\begin{proposition}
\label{prop:projection}
Assume, in addition to the conditions of Proposition \ref{prop:gaussian},  that\eqref{eqn:linearize_g00} holds. Then
\begin{equation}
\begin{split}
\label{eqn:proj1}
\vthetahat(\gthetahat) = \vtheta\bigl(\Pi \gtheta\bigl)+ o_{_P}(1)\quad
\text{with}\quad \vtheta\bigl(\Pi \gtheta\bigl)=\vtheta(\gtheta) - \langle \gtheta,\psitheta^{\sf T}\rangle\langle \btheta, \psitheta^{\sf T}\rangle^{-1} \vtheta(\btheta)
\end{split}
\end{equation}
where the linear transformation
\begin{equation}
\label{eqn:projectionB}
\Pi \gthetaxy=\gthetaxy - \langle \gtheta,\psitheta^{\sf T}\rangle\langle \btheta, \psitheta^{\sf T} \rangle^{-1} \bxy
\end{equation}
is a projection of $\gtheta$ parallel to the function $\btheta$, which defines the estimating equations, and orthogonal to the score function $\psitheta$,   i.e.,
\begin{equation}
\label{eqn:projection1} 
\langle \Pi \gtheta,\psitheta\rangle= 0\quad \text{and} \quad \Pi \bxy = 0 .
\end{equation}
The projector $\Pi$ is orthogonal when $b_{_\theta}=\psi_{_\theta}$.  

Moreover, denote with $\sigma^2_{\Pi\gtheta}$ the limit of $\|\Pi\gtheta\|^2$. If $\int \hspace{-0.2cm}\int \btheta^2(x,z)\mu_{_\theta}(dx,dz)<\infty$, then $\sigma^2_{\Pi\gtheta}<\infty$  and,  under $H_0$, 
\begin{equation*}
\label{eqn:Gaussian}
\vthetahat(\gthetahat)\xrightarrow[]{d} N(0,\sigma^2_{\Pi\gtheta}).
\end{equation*}
\end{proposition}}

\sara{
\begin{proof}
The components of $\btheta$ belong to ${\mathcal L}_2(\mutheta)$; thus, the asymptotic representation in  \eqref{eqn:linearize_g00} can also be applied to the left-hand side of the estimating equations in \eqref{eqn:general_est_eq}. By equating the resulting expansion to zero, we obtain,
\begin{equation*}
\label{eqn:thetahat}
\begin{split}
\sqrt{K}(\wh\theta - \theta)= \langle \btheta, \psitheta^{\sf T}\rangle^{-1} \vtheta(\btheta) +o_{_P}\hspace{-0.05cm}(1)
\end{split}
\end{equation*}
and substituting the right-hand side of this equality into \eqref{eqn:linearize_g00} gives \eqref{eqn:proj1}. 

One can show that  $\Pi $ is a projection operator by verifying that 
$\Pi \Pi  \gtheta=\Pi \gtheta$; whereas, the conditions in \eqref{eqn:projection1} can be easily verified algebraically. 

Finally, since both $\gtheta$ and $\btheta$ are square integrable with respect to $\mu_{_\theta}$, the limits of $\langle \gtheta,\psitheta^{\sf T}\rangle$, $\langle \btheta, \psitheta^{\sf T}\rangle$, $\|\gtheta\|$, and $\|\btheta\|$ exist; hence,  $\sigma^2_{\Pi\gtheta}<\infty$ and the stated Gaussian distribution  follows from \eqref{eqn:proj1} and Proposition \ref{prop:gaussian}. \qed
\end{proof}

The projection in \eqref{eqn:proj1}, characterizing divisible statistics with estimated parameters, has implications of both practical and theoretical interest. 
For instance, it enables us to express  $\vthetahat(\gthetahat)$, with estimated parameter,  as $\vtheta$ evaluated at the function $\Pi \gtheta$ with fixed parameter. The latter belongs to the class of divisible statistics and is, therefore,  asymptotically Gaussian (see Section \ref{sec:chandraI} for an example). As formalized in Proposition \ref{prop:powerMLE} in Section \ref{sec:shift}, however, the random variable  $\vtheta(\Pi \gtheta)$ should be stochastically smaller than   $\vtheta(\gtheta)$, and thus, substituting $\widehat{\theta}$ in $\vtheta$ changes the entire class of divisible statistics, making them stochastically smaller. 
As described in Section \ref{sec:recovering_distr_free}, the existence of $\Pi$ is also the key fact that enables the construction of distribution-free tests, with known asymptotic distribution.

Let us now analyse the structure of $\Pi$.} The second equality in \eqref{eqn:projection1} tells us that $\btheta$ belongs to the kernel of the operator $\Pi$. When $\btheta\neq\psitheta$, however, the kernel of $\Pi$ does not coincide with the orthogonal complement of its image.  Hence, in general, the projection is not orthogonal, but, as noted in Proposition \ref{prop:projection}, it is when the parameters are estimated via MLE. For the latter, the leading term on the right-hand side of  \eqref{eqn:proj1} can be expressed as
\begin{equation*}
\label{eqn:projection_MLE}
\vtheta(\Pi\gtheta)=\vtheta(\gtheta) - \sum_{j=1}^p\langle \gtheta,s_j\rangle \vtheta(s_j),
\end{equation*}
in which $\psithetatildej$ denotes the  $j$-th coordinate of the orthonormalized score function
\begin{equation*}
\label{eqn:normalized_score}
s(x,z)=\langle \psitheta, \psitheta^{\sf T}\rangle^{-1/2}\psitheta(x,z)
\end{equation*}
and $\langle \psitheta, \psitheta^{\sf T}\rangle^{-1/2}$ is the inverse square root of the Fisher information matrix.

\sara{In real data analyses, statisticians would typically choose a function $g$ for goodness-of-fit testing, and, in some instances, they may also want to, or have to, use the same function to estimate the unknown parameters. In this case, we may choose the function $\btheta$ defining the estimating equations in \eqref{eqn:general_est_eq} equal to the right-hand side of \eqref{eqn:omegag} with the weighting function $\omega_{_\theta}$ depending on $\theta$.} That is, for a given choice of $g$, we consider 
the class of estimators defined by 
\begin{equation}
\label{eqn:est_eq_omega}
\vthetahat(\omega_{_{\thetahat}} g)=0.
\end{equation}
For instance, when $g$ gives the linear statistic, two members of such a class are the MLE and the least squares estimator with $\omega_{_\theta}(x)=\frac{\mdotx}{\mthetax}$ and $\omega_{_\theta}(x)=-2\mdotx$, respectively.   

The estimator with the minimum variance among those given by \eqref{eqn:est_eq_omega} can be constructed   by choosing $\omega_{_\theta}(x)$ equal to
\begin{equation}
\label{eqn:gamma_fun}
 \gamma_{_\theta}(x) = \frac{\Etheta \bigl[g(\nutx,\mthetax)\psitheta(x,\nutx)\bigl]}{\Etheta[g^2(\nutx,\mthetax)]}.
\end{equation}
This result is formalized in the next proposition.  

\begin{proposition}
\label{prop:optimality}
Let $\thetahat$ be a consistent root of the estimating equations defined in \eqref{eqn:est_eq_omega} 
and let $u$ be a vector in $\Real^p$. Under the conditions of Proposition \ref{prop:projection}, for a given choice of $g$, the asymptotic variance
of  $u^{\sf T} \sqrt{K}\bigl( \thetahat - \theta \bigr) $ is the smallest when $\omega_{_\theta}=\gamma_{_\theta}$.
\end{proposition}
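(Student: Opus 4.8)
The plan is to start from the linear representation \eqref{eqn:thetahat} of the estimator. With $\btheta=\omega_\theta g$ the $p$-vector of functions defining the estimating equations \eqref{eqn:est_eq_omega} (so $\omega_\theta$ is $p$-dimensional and $g$ scalar), \eqref{eqn:thetahat} gives $\sqrt{K}(\thetahat-\theta)=\langle\btheta,\psitheta^{\sf T}\rangle^{-1}\vtheta(\btheta)+o_P(1)$. Since $\vtheta(\btheta)$ is asymptotically mean-zero Gaussian with covariance matrix $\langle\btheta,\btheta^{\sf T}\rangle$ (the multivariate version of the scalar variance $\|\gtheta\|^2$ recorded in Section~\ref{sec:divisible}) and $\langle\psitheta,\btheta^{\sf T}\rangle=\langle\btheta,\psitheta^{\sf T}\rangle^{\sf T}=:M^{\sf T}$, the asymptotic covariance of $\sqrt{K}(\thetahat-\theta)$ is the sandwich
\[\Sigma(\omega_\theta)=\langle\btheta,\psitheta^{\sf T}\rangle^{-1}\langle\btheta,\btheta^{\sf T}\rangle\langle\psitheta,\btheta^{\sf T}\rangle^{-1}=M^{-1}V(M^{\sf T})^{-1}.\]
Because the asymptotic variance of $u^{\sf T}\sqrt{K}(\thetahat-\theta)$ equals $u^{\sf T}\Sigma(\omega_\theta)u$, it suffices to prove the stronger statement that $\Sigma(\gamma_\theta)\preceq\Sigma(\omega_\theta)$ in the Loewner order for every admissible $\omega_\theta$; this delivers minimality for all $u$ at once.

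The second step reduces everything to two $x$-dependent blocks. Writing $a(x):=\Etheta[g(\nutx,\mthetax)\psitheta(x,\nutx)]$ (a $p$-vector) and $v(x):=\Etheta[g^2(\nutx,\mthetax)]>0$ (a scalar), the factorization $\btheta=\omega_\theta g$ turns the inner products into $M=\langle\btheta,\psitheta^{\sf T}\rangle=\int\omega_\theta(x)a^{\sf T}(x)\muk(dx)$ and $V=\langle\btheta,\btheta^{\sf T}\rangle=\int\omega_\theta(x)\omega_\theta^{\sf T}(x)v(x)\muk(dx)$. The point of the choice \eqref{eqn:gamma_fun}, $\gamma_\theta=a/v$, is that it makes the two inner products coincide: $M(\gamma_\theta)=V(\gamma_\theta)=\int a(x)a^{\sf T}(x)/v(x)\,\muk(dx)=:J$, a symmetric matrix, whence $\Sigma(\gamma_\theta)=J^{-1}JJ^{-1}=J^{-1}$.

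The crux is then a matrix Cauchy--Schwarz inequality, which I would obtain by completing the square: for any constant $p\times p$ matrix $C$,
\[0\preceq\int\bigl(\omega_\theta\sqrt{v}-C\,a/\sqrt{v}\bigr)\bigl(\omega_\theta\sqrt{v}-C\,a/\sqrt{v}\bigr)^{\sf T}\muk(dx)=V-MC^{\sf T}-CM^{\sf T}+CJC^{\sf T}.\]
Choosing $C=MJ^{-1}$ collapses the right-hand side to $V-MJ^{-1}M^{\sf T}$, so $V\succeq MJ^{-1}M^{\sf T}$; conjugating by $M^{-1}$ then gives
\[\Sigma(\omega_\theta)=M^{-1}V(M^{\sf T})^{-1}\succeq M^{-1}\bigl(MJ^{-1}M^{\sf T}\bigr)(M^{\sf T})^{-1}=J^{-1}=\Sigma(\gamma_\theta),\]
which is the asserted domination.

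The algebra of the sandwich and the completing-the-square identity are routine; the substantive points are the two limiting facts that feed them. First, one must justify that $\vtheta(\btheta)$ is jointly asymptotically Gaussian with covariance \emph{exactly} $\langle\btheta,\btheta^{\sf T}\rangle$: this relies on the pointwise centering built into the definition \eqref{eqn:divisible_stat_general}, together with the weak convergence $\muk\to\mu$, so that per-bin second moments and variances agree in the limit, and on the independent-increment structure of the Poisson counts to supply a multivariate central limit theorem. Second, $M=\langle\btheta,\psitheta^{\sf T}\rangle$ must be nonsingular --- the identifiability condition guaranteeing a consistent root of \eqref{eqn:est_eq_omega} and a well-defined $\Sigma(\omega_\theta)$ --- and $J$ must be invertible, i.e. the components of $a/\sqrt{v}$ must be linearly independent in $L_2(\muk)$. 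Verifying these regularity conditions, of the type already invoked before \eqref{eqn:thetahat}, in the present sparse regime is the only genuine obstacle.
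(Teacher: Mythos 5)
Your proof is correct, and the two caveats you flag at the end (nonsingularity of $M=\langle b_{_\theta},\psi_{_\theta}^{\sf T}\rangle$ and of $J$, plus the multivariate CLT giving $v_{_{\theta,K}}(b_{_\theta})$ covariance $\langle b_{_\theta},b_{_\theta}^{\sf T}\rangle$) are exactly the conditions the paper leaves implicit. Both arguments pivot on the same key identity: since $E_{\theta}[g\,\psi_{_\theta}]=\gamma_{_\theta}E_{\theta}[g^2]$, i.e.\ $a=\gamma_{_\theta}v$ in your notation, the cross matrix $\langle\omega_{_\theta}g,\psi_{_\theta}^{\sf T}\rangle$ equals $\langle\omega_{_\theta}g,g\gamma_{_\theta}^{\sf T}\rangle=\int\omega_{_\theta}a^{\sf T}\mu_{_K}(dx)$, and the choice $\omega_{_\theta}=\gamma_{_\theta}$ makes it coincide with the Gram matrix $J$. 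After that, however, the engines differ. The paper argues stochastically: it forms the two scalar random variables $V=u^{\sf T}\langle\omega_{_\theta}g,g\gamma_{_\theta}^{\sf T}\rangle^{-1}v_{_{\theta,K}}(\omega_{_\theta}g)$ and $W=u^{\sf T}J^{-1}v_{_{\theta,K}}(\gamma_{_\theta}g)$, verifies the covariance identity $E_{\theta}[VW]=E_{\theta}[W^2]$, and concludes from $0\le E_{\theta}[(V-W)^2]=E_{\theta}[V^2]-E_{\theta}[W^2]$ --- a projection argument on random variables. You argue deterministically with the inner-product matrices: the sandwich covariance $M^{-1}\langle b_{_\theta},b_{_\theta}^{\sf T}\rangle (M^{\sf T})^{-1}$, the completing-the-square (matrix Cauchy--Schwarz) bound $\langle b_{_\theta},b_{_\theta}^{\sf T}\rangle\succeq MJ^{-1}M^{\sf T}$ with the optimal $C=MJ^{-1}$, and conjugation by $M^{-1}$. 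These are the same Hilbert-space fact in different clothing, but each packaging buys something: the paper's proof is shorter and needs only scalar second moments, at the price of hiding the covariance computations behind ``after some algebra''; yours makes those computations explicit, exhibits the optimal asymptotic covariance in closed form as $J^{-1}=\bigl(\int a a^{\sf T}/v\,\mu_{_K}(dx)\bigr)^{-1}$, and states the conclusion directly in the Loewner order $\Sigma(\omega_{_\theta})\succeq\Sigma(\gamma_{_\theta})$, uniformly over all $u$ at once (a Gauss--Markov-type statement that the paper's proof also yields, but only implicitly, by quantifying over $u$).
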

\begin{proof}
By the definition of $\gamma_{_\theta}$ we have that
$\langle \omega_{_\theta} g,  \psitheta \hspace{-0.05cm}^{\sf T}\rangle= \langle \omega_{_\theta} g, g \gamma_{_\theta}\hspace{-0.05cm}^{\sf T}\rangle$. Therefore, depending on the weighting function used, the asymptotic representation of $\thetahat$ is given by either 
\begin{equation*}
\begin{split}
\label{eqn:asympt}
\sqrt{K}(\wh\theta - \theta)&\sim \langle \omega_{_\theta} g, g  \gamma_{_\theta}\hspace{-0.05cm}^{\sf T}\rangle^{-1} \vtheta(\omega_{_\theta} g)\\
\text{or}\quad&\sim \langle \gamma_{_\theta} g ,  g\gamma_{_\theta}\hspace{-0.05cm}^{\sf T}\rangle^{-1}  \vtheta(\gamma_{_\theta} g).
\end{split}
\end{equation*}
Let us now consider the random variables 
\begin{align*}
V=  u^{\sf T}  \langle \omega_{_\theta} g,g   \gamma_{_\theta}\hspace{-0.05cm}^{\sf T}\rangle^{-1}   \vtheta(\omega_{_\theta} g) \quad\text{and}\quad W=  u^{\sf T} \langle \gamma_{_\theta} g ,  g\gamma_{_\theta}\hspace{-0.05cm}^{\sf T}\rangle^{-1}  \vtheta(\gamma_{_\theta} g).
\end{align*}
After some algebra, we obtain that their covariance is
$\Etheta[V W]=\Etheta [W^2]$. 
Hence,
\[ 0\leq\Etheta\bigl[(V - W)^2\bigl] =\Etheta [V^2] - \Etheta [W^2];\]
thus, the variance of $W$ is always  smaller  or equal to that of $V$. \qed
\end{proof}

When $g$ defines the linear statistic, we have that  $\gamma_{_\theta}g=\psitheta$.  When $g$ is not linear, for each $x$ fixed,  $\gamma_{_\theta}g$ gives the best mean square approximation of $\psitheta$ given $g$. 
For instance, when $g$ defines  the centred Pearson's $\chi^2$ statistic, $\gamma_{_\theta}g$  specifies as
\[\frac{\mdotx}{2 \mthetax + 1}\biggl[\frac{(z-\mthetax)^2}{\mthetax}-1\biggl].\]
\sara{Another relevant example arises in the so-called problem of empty boxes,  in which the frequencies $\{\nu(x_k)\}_{k=1}^K$ are not available; the user only knows which bins are empty and which are not \citep{GneHanPit07,domanski}. Yet, inference on $m_{_\theta}$ is needed. The empty boxes statistic is given by the function
$$\mathds{1}_{\{z= 0\}}-p(0|\mthetax)$$
But what should be the form of the estimating equations for $\theta$?  Proposition \ref{prop:optimality} provides the answer. In particular, $\gamma_{_\theta}g$ is  
\[\frac{\mdotx}{1-p(0|\mthetax)}\bigl[\mathds{1}_{\{z= 0\}}-p(0|\mthetax)\bigl]\]
and coincides with the maximum-likelihood equation for Bernoulli trials.}


\subsection{Testing background uniformity in sparse $X$-ray spectra via Pearson's $\chi^2$ test}
\label{sec:chandraI}
In the analysis of source-free $X$-ray spectra, the mean function is often assumed to be constant \citep[e.g.,][]{luna,bonamente}, that is, $\mthetax=c$   for all $x\in \X$. In this setting, the score function is $\psitheta(x,z)=\frac{1}{c}(z-c)$ for all $x\in \X$ and $c$ can be estimated via MLE, i.e.,  $\widehat{c}=\frac{\sumk\nut}{K}$. 

\sara{
The divisible statistic $\vthetahat(\gthetahat)$ chosen to test the hypothesis of uniformity of the background is the centered Pearson's statistic in \eqref{eqn:pearson} with $\mthetax$ replaced by $\widehat{c}$. In this case,
$$\quad \| \gtheta\|^2=2+\frac{1}{c},\quad||\psitheta||^2=\frac{1}{c},\quad\text{so that}\quad s(x,z)=\frac{(z-c)}{\sqrt{c}},\quad\text{and}\quad \langle \gtheta,s\rangle=\frac{1}{\sqrt{c}}.$$
Therefore, when the number of bins considered is sufficiently large, by Proposition \ref{prop:projection},  the asymptotic null distribution of Pearson's statistic with $c$ estimated is Gaussian with zero mean and variance $\|\Pi \gtheta\|^2=\| \gtheta\|^2-\langle \gtheta, s\rangle^2=2.$}

Let us now test the hypothesized uniformity of the background density using the source-free spectrum from Chandra in Fig. \ref{fig:RTCru_spectrum}. For these data,  $\widehat{c}=6.947$, and the value of Pearson's $\chi^2$ statistic observed is $-0.852$.  
The p-value based on the Gaussian approximation is $0.547$.
Thus, the test \sara{accepts the constant background model rather obviously.
When analyzed by \citet{zhang} using smooth tests \citep[cf.][Ch.14]{lehmann1986testing}, the same data led to a clear rejection of uniformity (with a p-value $0.0004$).} This divergence in conclusions is a good illustration of the phenomenon we describe from a theoretical standpoint in Section \ref{sec:spacehomogeneous}: \sara{in the asymptotic regime considered in our study,} Pearson's $\chi^2$ and many other divisible statistics have no asymptotic power against weak departures from uniformity.

\section{Contiguous alternatives and adequacy for goodness-of-fit}
\label{sec:alternatives}
In the context of goodness-of-fit, the usual class of alternatives considered for such a study is that of \emph{contiguous alternatives}.  The notion of contiguity, introduced by \citet{lecam}, allows us to define sequences of alternatives that converge to the null hypothesis as the sample size increases, thereby describing local deviations from $m_{_\theta}$ due to fainter and fainter signals. Yet, these alternatives remain distinguishable from the null hypothesis.
\begin{figure}[!h]
    \centering
        \includegraphics[width=50mm]{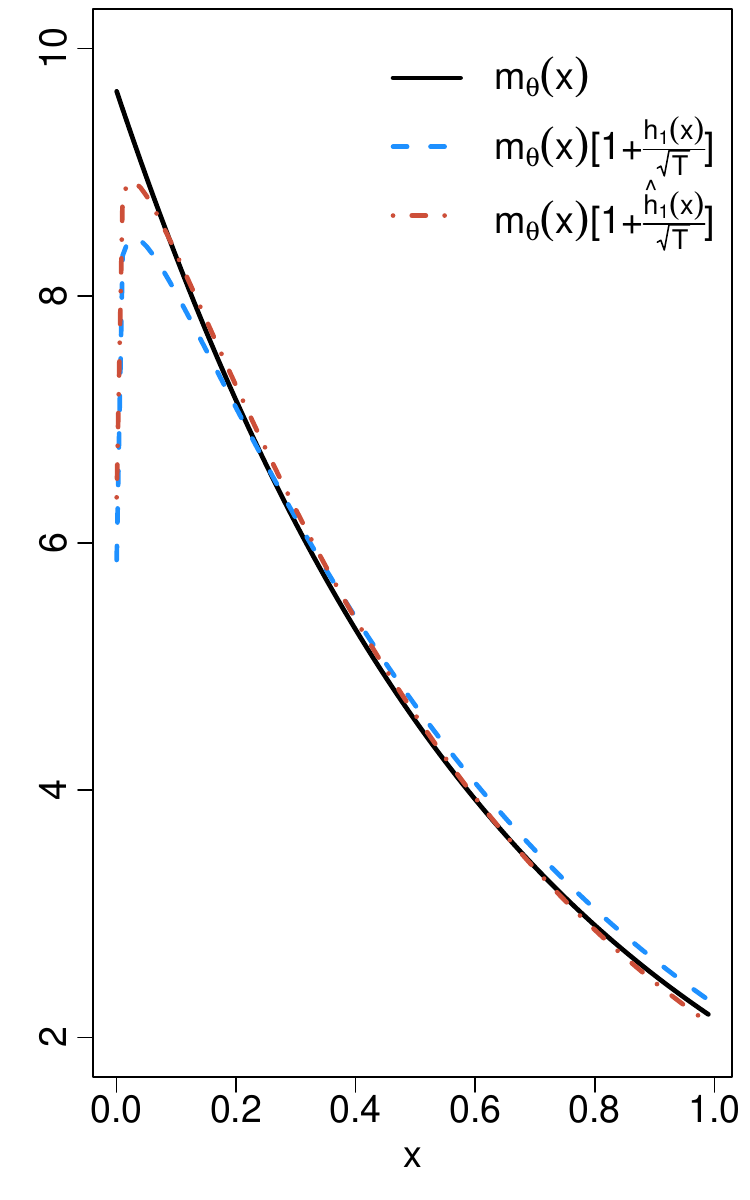}\hspace{-0.35cm}
        \includegraphics[width=50mm]{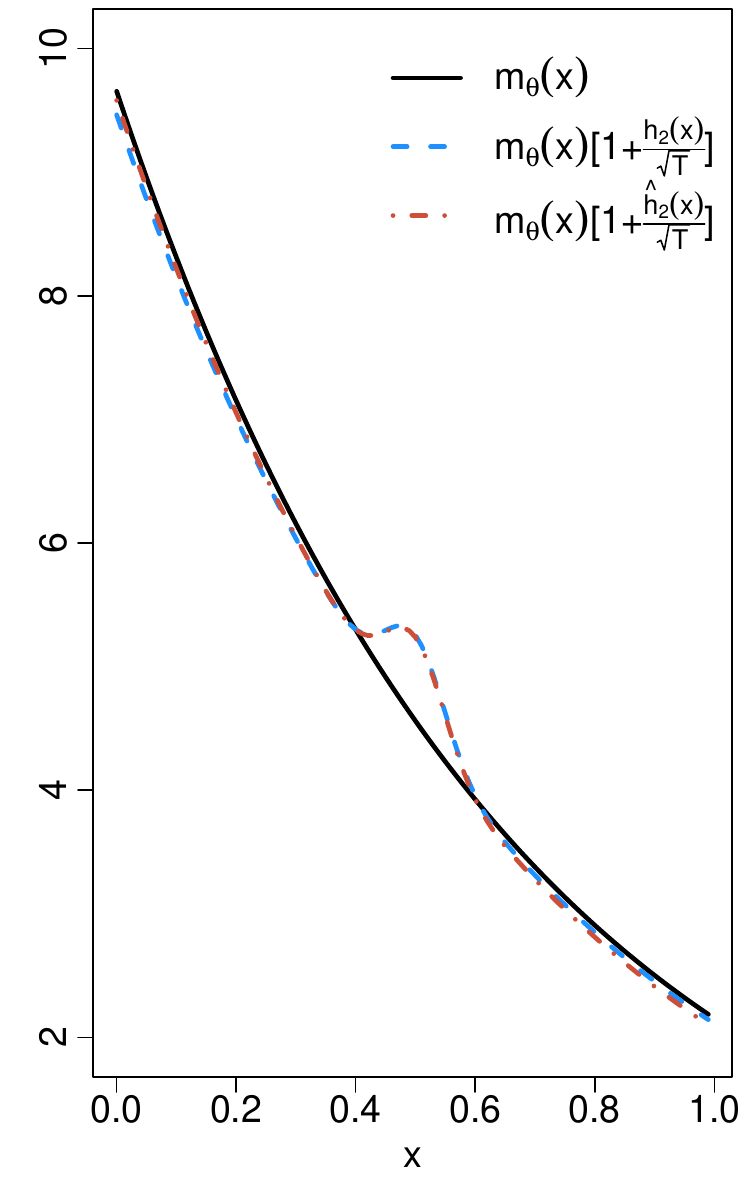}\hspace{-0.35cm}
               \includegraphics[width=50mm]{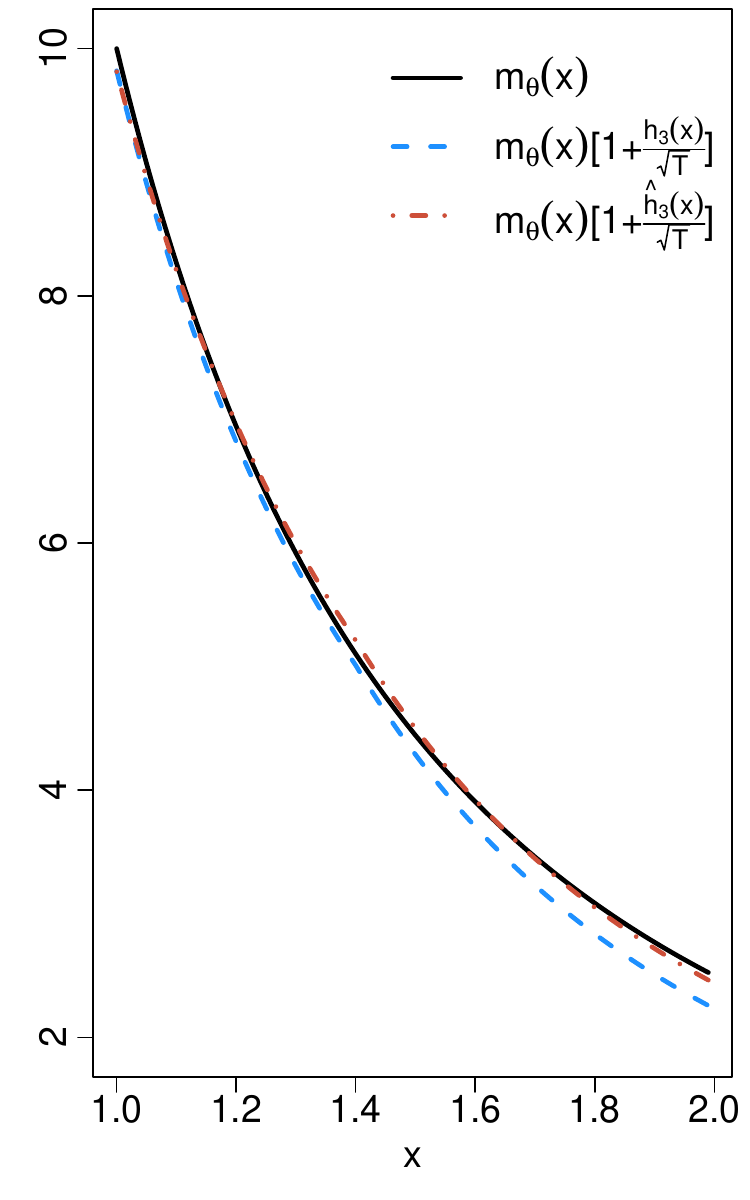}\\
       \vspace{-0.4cm}
    \caption{Mean functions under the null (black solid lines),  alternatives (blue dashed lines), and their component detectable by parametric tests obtained by replacing $h$ in \eqref{eqn:mtilde} with $\widehat{h}$ in \eqref{eqn:hhat} (red chained lines) for Example I (left panel), Example II (central panel), and Example III (right panel). }
    \label{fig1}
\end{figure} 
\sara{
\begin{definition}
\label{def:alternatives}
Consider the distribution function
$\Lambda_\beta(A)=\int_A \lambda_\beta(x)dx$.
When $\theta$ is known, we say that  
\begin{equation}
\label{eqn:mtilde}
\mtildex= \mthetax\biggl[1+\frac{ h_{_T}\hspace{-0.05cm}(x)}{\sqrt{T}}\biggl]
\end{equation}
defines  a sequence of \emph{contiguous alternatives} to  $m_{_\theta}$ if, 
for a given  sequence of functions, $h_{_T}\hspace{-0.05cm}\in L_2(\Lambda_\beta)$, we have
\begin{equation}
    \label{eqn:condition_contiguity}
    \limsup_{T\to\infty} \|h_{_T}\|^2_{\Lambda_\beta}=\limsup_{T\to\infty}\int h^2_{_T}(x)\Lambda_\beta(dx)<\infty.
\end{equation}
Moreover, if for some $h\in L_2({\Lambda_\beta})$,
\begin{equation}
    \label{eqn:condition_converging_cont}
    \|h_{_T}-h\|^2_{\Lambda_\beta}\rightarrow 0,\quad\text{as $T\rightarrow \infty$,}
\end{equation}
we say that $\mtildex$ defines a sequence of \emph{converging contiguous alternatives} to $m_{_\theta}$.
\end{definition}
}

The function $h$ defining a given sequence of alternatives can be interpreted as the functional direction from which $\widetilde{m}_{_{\theta,T}}$ approaches  $m_{_\theta}$ and $\|h\|^2_{\Lambda_\beta}$ can be taken equal to one without loss of essential content. Sequences of $\widetilde{m}_{_{\theta,T}}$ approaching $m_{_\theta}$ from the same direction can be considered equivalent and can be labeled by functions $h$ in the unit ball in $L_2(\Lambda_\beta)$.

When testing parametric hypotheses, the class of converging contiguous alternatives is much wider compared to the case of simple hypotheses. Such a class may seem to include all sequences of alternatives indexed by functions $h$ in the unit ball in any of the spaces $L_2(\Lambda_\beta)$, $\beta\in \Real^{p-1}$.  However, this is not quite true. Infinitesimal changes along the tangent spaces of the parametric family $\{\Lambda_\beta, \beta \in \Real^{p-1}\}$ are now not relevant. As the projection argument of Proposition \ref{prop:projection} and Proposition \ref{prop:powerMLE} in Section \ref{sec:shift} show, such changes are not detectable by statistics with estimated parameters.  \sara{ Hence, we extend our Definition \ref{def:alternatives} as follows:
\begin{definition}
\label{def:alternatives_par}
When testig parametric hypotheses, in addition to \eqref{eqn:mtilde}-\eqref{eqn:condition_converging_cont}, we further assume that
\begin{equation}
\label{eqn:orthogonal}
\begin{split}
\int h(x) \Lambda_\beta(dx) = 0\quad\text{and}
\quad\int h(x)\frac{\dot\lambda_\beta(x)}{\lambda_\beta(x)}  \Lambda_\beta(dx) = 0,
\end{split}
\end{equation}
with $\dot\lambda_\beta(x)=\frac{\partial}{\partial \beta}\lambda_\beta(x)$.
\end{definition}}
The first condition in \eqref{eqn:orthogonal} excludes sequences of alternatives that differ from  $\lambda_\beta$ only by a scaling factor and can be accounted for through the estimation of $c$. In other words, we consider alternatives that only affect \sara{the spread}  of the data over $\X$; as a result, the product of $\lambda_\beta$ and the term in the square brackets in \eqref{eqn:mtilde} integrates to one.
The second condition in \eqref{eqn:orthogonal} excludes alternatives indexed by functions $h\in L_2(\Lambda_\beta)$ which are linear combinations of the coordinates of $\frac{\dot{\lambda}_\beta(x)}{\lambda_\beta(x)}$. The latter describes \sara{small changes of $\lambda_\beta$ in $\beta$} \citep[cf.][]{amari,kass2011}.
Such deviations do not describe departures from the parametric family as such and are incorporated into the null through the estimation of $\beta$.

\sara{Alternatives for which both conditions in \eqref{eqn:orthogonal} hold are of the form
\begin{equation}
\label{eqn:hhat}
\widehat{h}(x)=h(x)-{\frac{\mdotx}{\mthetax}}^{\sf T}\Gamma^{-1}_{_\theta}\hspace{-0.05cm}\int{ \frac{\dot{m}_{_\theta}(y)}{m_{_\theta}(y)}} h(y) \Lambda_\beta(dy),
\end{equation}
with $\Gamma_{_\theta}\hspace{-0.05cm}=\bigintss \frac{\mdotx}{\mthetax}{\frac{\mdotx}{\mthetax} }^{\sf T}\Lambda_\beta(dx)=\frac{1}{c}\langle \psitheta,\psitheta^{\sf T}\rangle$.}
The three examples below illustrate how statistically interesting alternatives defined by functions $h$ may differ from alternatives defined by functions $\widehat{h}$ and from the null hypothesis. 

\textbf{\emph{Example I}}. Let $\Lambda_\beta$ be the distribution of an exponential random variable with rate $\beta=1.5$ truncated over the range $\X=[0,1]$. \sara{The alternative considered corresponds to the (truncated) gamma distribution with the same rate. Thus, the functional direction characterizing the alternative is the normalized score function for the shape parameter, i.e.,  
$$
h_1(x)=a\ln x+b;
$$
where $a=0.87$ and $b=1.21$ ensure that $h_1$ has a unit norm.  The function $h_1$ satisfies the first condition in \eqref{eqn:orthogonal}; hence, the alternative affects the spread of the data but has no impact on the expected sample size described by $c$. }

The left panel of Fig. \ref{fig1} shows the graphs of the models for the expected counts when $T=500$, $K=100$, and $c=5$. For this example, the alternative defined by $h_1$ (blue dashed line) and that given by $\widehat{h}_1$ (red chained line) are somewhat different from one another. In this case,  
\begin{equation}
\label{eqn:linear_part}
\frac{\dot{\lambda}_\beta(x)}{\lambda_\beta(x)}=\varphi(\beta)-x
\end{equation}
where $\varphi(\beta)$ is a constant depending on $\beta$. Thus, the component of $h_1$ collinear with \eqref{eqn:linear_part} describes deviations within the family of exponential distributions; whereas, the component $\widehat{h}_1$, orthogonal to \eqref{eqn:linear_part}, involves a logarithmic term and describes what is truly of interest: deviations within the family of gamma distributions with shape parameter different from one.

\textbf{\emph{Example II}}. In physics and astronomy, when a reliable description of the background is available, a common problem is the detection of a faint `bump-like' signal on top of it. Assume the background density, $\lambda_b$, to be the same truncated exponential null model described in Example I and let $\lambda_s(\cdot,x_0)$ be the density of the signal, here assumed to be Gaussian with mean $x_0=0.5$ and standard deviation $\sigma=0.05$. The density function for the alternative model, inclusive of both background and signal, is 
\begin{equation*}
    \label{eqn:mixtureA}
 (1-\eta_{_T}) \lambda_b(x)+\eta_{_T} \lambda_s(x,x_0),
\end{equation*}
where $\eta_{_T}\in \Real$ is the signal strength relative to $T$  --  that is, the expected number of signal events is $T\eta_{_T}$.   Assume $\eta_{_T}$ to be appropriately small, i.e.,  \sara{$\eta_{_T}\sim T^{-\frac{1}{2}}$}. 
Hence, the alternative (background+signal)  model defined by the above density can be rewritten as in  \eqref{eqn:mtilde}, with $h_{_T}(x)$ converging to
\begin{equation*}
\label{eqn:h_bump}
h_2(x)=a\bigl[e^{-\frac{1}{2}(\frac{x-x_0}{\sigma})^2+\beta x}-1\bigl]
\end{equation*}
in which the quantity in the square brackets corresponds to  $\frac{\lambda_s(x,x_0)}{\lambda_b(x)}-1$ and 
the constant $a=0.44$ ensures that the function $h_2$ has a unit norm. Such a function describes the changes in \sara{the spread}  due to the presence of the signal, whereas $\eta_{_T}$ quantifies the closeness of the null and the alternative models as a function of the sample size $T$, here set equal to 500. 

The central panel of Fig. \ref{fig1} shows that the alternatives defined by $h_2$ (blue dashed lines) and $\widehat{h}_2$ (red chained line) overlap for almost all $x\in[0,1]$. That is because, in this case, the part of $h_2$ collinear with \eqref{eqn:linear_part}, and responsible for deviations within the family of exponential distributions, is negligible.

\textbf{\emph{Example III}}. Luminosity functions can often be described using power laws and broken power laws. Discerning between the two is a common problem arising in $X$-ray astronomy \citep[e.g.,][]{broken_pl1,broken_pl2,broken_pl3}.   In statistical terms, a power law is equal to a Pareto type I distribution, while a broken power law can be constructed by introducing a cutpoint to induce a change in the slope. Let our null model correspond to a power law with slope $\beta=2$ and truncated over the interval $\X=[1,2]$.  The function
$$
h_3(x)=  a(\ln \xi-\ln x)\mathds{1}_{\{x\geq \xi\}}+b
$$
describes the direction from which a broken power law with a cutpoint at $\xi=1.4$ approaches the simple power law model. Set  $a=5.58$ and $b=-0.40$ to ensure that $h_3$ has norm one.  
The right panel of Fig. \ref{fig1} displays the graphs of the mean models under the null (black solid line), the alternatives defined by $h_3$ (blue dashed line), and $\widehat{h}_3$ (red chained line) when $T=500$ and $c=5$. In this example, 
$$
\frac{\dot{\lambda}_\beta(x)}{\lambda_\beta(x)}=\phi(\beta)-\ln x
$$
where  $\phi(\beta)$ is a constant depending on $\beta$. Comparing the above expression with that of $h_3$, it is apparent that most of the departure from the null hypothesis entailed by $h_3$ consists of deviations within the parametric family. As a result,  the statistically interesting part  $\widehat{h}_3$ that remains unaffected by the estimation of $\beta$ is rather weak and almost indistinguishable from the null.

In the literature, the use of contiguous alternatives typically requires the validity of  \eqref{eqn:condition_converging_cont}. This condition is sufficient for contiguity but not necessary. On the contrary, \eqref{eqn:condition_contiguity} is both sufficient and necessary  \citep[cf.][]{oosterhoff}. This implies that there exists a sub-class of alternatives that are contiguous even without \eqref{eqn:condition_converging_cont}. In this case, it is not possible to identify a direction from which $\widetilde{m}_{_{\theta,T}}$ approaches $m_{_\theta}$, and thus, the sequence of alternatives diverges or oscillates without limit around the null. These alternatives, interesting in themselves, cannot be detected via goodness-of-fit, neither in the continuous nor in the grouped data regime, but they remain discernible from the null through specifically adjusted empirical processes. For this reason, they have been called \emph{chimeric alternatives} \citep[cf.][]{khm98}.

Therefore, in the attempt to formally define the adequacy of a statistical test for goodness-of-fit, we rely on the class of converging contiguous alternatives. 
\begin{definition}{\rm (Adequacy for goodness-of-fit)}
\label{def:gof_propetry}
We say that a statistical test is adequate for goodness-of-fit if its power exceeds the significance level for all converging contiguous alternatives.
\end{definition}
While the expression `all converging contiguous alternatives' is formally correct, given how these alternatives are defined, it may seem an exaggeration. Indeed, all the alternatives considered here preserve the assumption that the observed process is Poisson, i.e., a point process with independent increments. Thus, the statements under the null and the alternative only apply to the intensity of this process. 

When testing simple hypotheses in a sparse regime,   tests based on a single divisible statistic, including Pearson's $\chi^2$, are known not to satisfy  Definition  \ref{def:gof_propetry} \citep[cf.][]{khm84}. As shown in the sections that follow, the problem persists when testing parametric hypotheses.

In the remainder of the manuscript, we will only consider alternatives within the class of converging contiguous alternatives; thus, we will simply refer to them as `alternatives'.

\section{On the behavior of divisible statistics under contiguous alternatives}
\label{sec:sec6}
\subsection{Distribution of divisible statistics under converging contiguous alternatives} 
\label{sec:shift}
The stochastic representation of divisible statistics in \eqref{eqn:divisible_stat_general} allows us to determine their power on the basis of the asymptotic behaviour of the shift they acquire under the alternatives. As noted in Section \ref{sec:divisible}, the statistics $\vtheta(\gtheta)$ are asymptotically Gaussian with zero mean. The limits of their first two moments under the converging contiguous alternatives are given in the following proposition. 
\begin{proposition}
\label{prop:magic_form}
Denote with $\Etilde[\cdot]$ and $\Vtilde[\cdot]$, respectively, the expectation and the variance taken under the assumption that each $\nut$ has Poisson distribution with expected value $\mtildexk$. Let $h$ be the functional direction defining the alternatives as in \eqref{eqn:mtilde}-\eqref{eqn:condition_converging_cont}. Under the conditions of Proposition \ref{prop:gaussian},
\begin{equation}\label{eqn:shift_g}
\Etilde[\vtheta(\gtheta)]\sim \frac{1}{\sqrt{c}} \int C(x;\gtheta)  h(x) \mu(dx),
\end{equation}
where
\begin{equation}
\label{eqn:Cg} 
C(x; \gtheta) = \Etheta [\gtheta(x, \nutx)(\nutx - \mthetax)].
\end{equation}
Moreover, $\Vtilde[\vtheta(\gtheta)]\sim \sigma^2_{\gtheta}$.
\end{proposition}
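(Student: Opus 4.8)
The plan is to exploit the independence of the increments of $\NT$: since the $\nut$ are independent, both the mean and the variance of $\vtheta(\gtheta)$ split into sums of per-bin contributions. The subtle point driving the whole result is that the centering inside $\vtheta(\gtheta)$ in \eqref{eqn:divisible_stat_general} is computed under the null mean $\mtheta$, whereas $\Etilde$ and $\Vtilde$ are computed under the alternative mean $\mtildexk$; the shift in \eqref{eqn:shift_g} is therefore entirely generated by this mismatch. Writing $E_m$ and $V_m$ for expectation and variance under a Poisson law of mean $m$, with $Z$ a generic such variable, the one fact I would record at the outset is the Poisson differentiation identity
\[\frac{\partial}{\partial m}E_m[\gtheta(x,Z)] = E_m\bigl[\gtheta(x,Z+1)-\gtheta(x,Z)\bigr]=\frac{1}{m}E_m\bigl[\gtheta(x,Z)(Z-m)\bigr],\]
whose right-hand side is exactly $C(x;\gtheta)/m$ for the function in \eqref{eqn:Cg}.

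For the expectation, independence gives $\Etilde[\vtheta(\gtheta)]=\frac{1}{\sqrt{K}}\sumk\bigl(E_{\mtildexk}[\gtheta(x_k,Z)]-E_{\mtheta}[\gtheta(x_k,Z)]\bigr)$, and I would Taylor-expand each difference in the mean about $\mtheta$. Since $\mtildexk-\mtheta=\mtheta\, h_{_T}(x_k)/\sqrt{T}$ by \eqref{eqn:mtilde}, the identity above turns the first-order term into $h_{_T}(x_k)\,C(x_k;\gtheta)/\sqrt{T}$. Regrouping the normalising constants as $\frac{1}{\sqrt{K}}\frac{1}{\sqrt{T}}\sumk=\sqrt{K/T}\,\frac{1}{K}\sumk$, the prefactor tends to $1/\sqrt{c}$ by \eqref{eqn:A2}, while the bin-average $\frac{1}{K}\sumk h_{_T}(x_k)C(x_k;\gtheta)$ is a Riemann sum. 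Using $h_{_T}\to h$ in $L_2(\Lambda_\beta)$, together with the fact that $C(x;\gtheta)$ carries a factor $\mthetax\propto\lambda_\beta(x)$ (so the limiting integral is naturally weighted by $\Lambda_\beta$ and matches the contiguity norm), this average converges to $\int C(x;\gtheta)h(x)\mu(dx)$, which is \eqref{eqn:shift_g}.

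For the variance, independence again gives $\Vtilde[\vtheta(\gtheta)]=\frac{1}{K}\sumk V_{\mtildexk}[\gtheta(x_k,Z)]$, the deterministic null-centering contributing nothing. Because $h_{_T}/\sqrt{T}\to 0$ we have $\mtildexk\to\mtheta$, so each per-bin variance converges to its null value $V_{\mtheta}[\gtheta(x_k,Z)]$. For the centered functions under study -- for which $E_{\mtheta}[\gtheta(x_k,Z)]=0$, as in every example of Section \ref{sec:estimators}, and which may be assumed without loss of generality since $\vtheta$ is unchanged by adding any function of $x$ to $\gtheta$ -- this equals $\Etheta[\gtheta^2(x_k,\nut)]$, and averaging over bins reproduces $\|\gtheta\|^2=\int\Etheta[\gtheta^2(x,\nutx)]\mu(dx)$.

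The main obstacle is the uniform control of the remainders. In the expectation I must bound the second-order Taylor term $\tfrac12(\mtildexk-\mtheta)^2\,\partial^2_mE_m[\gtheta(x_k,Z)]$ summed over bins; writing $\partial^2_mE_m[\gtheta(x,Z)]=E_m[\gtheta(x,Z+2)-2\gtheta(x,Z+1)+\gtheta(x,Z)]$ and using $(\mtildexk-\mtheta)^2=O(1/T)$ together with $h_{_T}\in L_2(\Lambda_\beta)$, the summed remainder is of order $\sqrt{K}/T=O(1/\sqrt{K})\to 0$; the analogous first-order perturbation of the per-bin variances is $O(1/\sqrt{T})$. Both bounds, however, require the first and second forward differences of $\gtheta$ to have Poisson expectations that are bounded uniformly in $x$ and in $m$ over a fixed neighborhood of $\mthetax$ (the means $\mtheta$ being bounded because $\lambda_\beta$ is bounded on the compact set $\X$), and the passage to the integral requires the piecewise continuity of $\lambda_\beta$ to legitimize the Riemann-sum limits. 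These uniform moment and regularity conditions are precisely the hypotheses the statement tacitly relies on, and verifying them for a given $\gtheta$ is where the real work lies.
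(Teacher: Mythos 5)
Your proposal is correct and follows essentially the same route as the paper's proof: the paper Taylor-expands the Poisson probabilities via $\frac{\partial p(z|m)}{\partial m}=\frac{z-m}{m}p(z|m)$, which is exactly your identity $\frac{\partial}{\partial m}E_m[\gtheta(x,Z)]=\frac{1}{m}E_m[\gtheta(x,Z)(Z-m)]$ summed against $\gtheta$, and it handles the variance by the same continuity-in-the-mean argument. Your explicit second-order remainder bound of order $\sqrt{K}/T=O(1/\sqrt{K})$ and the uniform moment conditions you flag correspond to the formal justification the paper defers to its Supplementary Material.
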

\begin{proof}
Recall that $\gtheta$ is assumed to be centered under the null. The expected value under the alternative is
$$
\Etilde[\vtheta(\gtheta)]= \sqrt{K} \int \sum_{z=0}^\infty \gtheta (x,z) \left [p(z|\mtildex) - p(z|\mthetax)\right]\muk(dx)
$$
where the difference in the square bracket is asymptotically small because $\mtildex - \mthetax$  is asymptotically small, while the Poisson probabilities $p(z|t)$ are regular in $t$. Therefore, one can approximate the difference in the square brackets using
\begin{align} 
\sqrt{K} \frac{\partial p(z|\mtheta)}{\partial \mtheta} (\mtildexk &- \mtheta) = (z-\mtheta)p(z|\mtheta)
\sqrt{K} \frac{\mtildexk - \mtheta}{\mtheta} \notag \\
&\sara{\sim} (z-\mtheta)p(z|\mtheta) \frac{1}{\sqrt{c}}h_{_T}(x_k).
\end{align}
The substitution of this approximation in the expression for $\Etilde[\vtheta(\gtheta)]$ leads to the result. (A formal justification of the approximation is given in the Supplementary Material.) As to the variance, the convergence $p(z|\mtildex) \to p(z|\mthetax)$ implies
$$ \Vtilde[\vtheta(\gtheta)]=\int \Etilde\bigl[\gtheta(x,\nu(x)) - \Etilde[\gtheta(x,\nu(x))]\bigl]^2 \muk(dx) \to \sigma^2_{\gtheta}.$$ 
\qed
\end{proof}
Since the asymptotic distribution of $\vtheta(\gtheta)$ under the alternatives differs from the null only in the mean, to assess the adequacy of divisible statistics for goodness-of-fit, we focus on their shift.

Assume that the representation in \eqref{eqn:proj1} is valid. According to Proposition \ref{prop:magic_form}, when $\thetahat$ solves the estimating equations in \eqref{eqn:general_est_eq}, the limiting shift of $\vtheta(\Pi\gtheta)$ becomes
\begin{equation}
\label{eqn:shift_projection}
\tilde \Etheta[\vtheta(\Pi \gtheta)] \sim \frac{1}{\sqrt{c}}\int C(x;\Pi \gtheta) h(x) \mu(dx).
\end{equation}
The analysis of when the right-hand side can be zero -- and, therefore, of when tests based on $\vthetahat( \gthetahat)$ have no \sara{asymptotic} power -- leads to four main findings. The first two are formalized in Proposition \ref{prop:powerMLE}, the others will be discussed in Sections \ref{sec:spacehomogeneous}-\ref{sec:linear}.
\begin{proposition}
\label{prop:powerMLE}
Under the conditions of Proposition \ref{prop:projection}:
\begin{itemize}
\item[(i)] The limiting shift of $\vtheta(\Pi\gtheta)$ under sequences of alternatives defined as in \eqref{eqn:mtilde}-\eqref{eqn:condition_converging_cont} is  
\begin{equation}
\label{eqn:shift_hhat}
\Etilde [\vtheta(\Pi \gtheta)] \sim\frac{1}{\sqrt{c}}\int C(x;\Pi \gtheta)\widehat{h}(x) \mu(dx),\end{equation}
with $\widehat{h}$  satisfying the conditions in \eqref{eqn:orthogonal}. Thus, tangential deviations defined by the vector function $\frac{\dot{m}_{_\theta}}{m_{_\theta}}$ are, asymptotically, not detectable.

\item[(ii)] If $\theta$ is estimated via MLE, \sara{then,  for any sequence of converging contiguous alternatives,} the right-hand side of \eqref{eqn:shift_hhat} reduces to
\begin{equation}
\label{eqn:shift_MLE}
 \frac{1}{\sqrt{c}}\int C(x; \gtheta) \widehat{h}(x) \mu(dx).
\end{equation}
Therefore, for sequences of alternatives defined by a function $\widehat{h}$, the limiting power of $\vthetahat(\gthetahat)$ is higher than or equal to that of $\vtheta(\gtheta)$.
\end{itemize}
\end{proposition}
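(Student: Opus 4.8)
The plan is to derive both parts from Proposition \ref{prop:magic_form}, using the projector identities of Proposition \ref{prop:projection} to annihilate the tangential directions; the real work is converting integrals against $h$ into integrals against $\widehat{h}$ and keeping the two measures $\mu$ and $\Lambda_\beta$ straight. For part (i), since $\Pi\gtheta$ is a centered element of $\mathcal{L}_2(\mutheta)$, Proposition \ref{prop:magic_form} applies to it verbatim and yields \eqref{eqn:shift_projection}. To pass to \eqref{eqn:shift_hhat} it suffices to show $\int C(x;\Pi\gtheta)\bigl[h(x)-\widehat{h}(x)\bigr]\mu(dx)=0$. By \eqref{eqn:hhat} the difference $h-\widehat{h}$ equals ${\frac{\mdotx}{\mthetax}}^{\sf T}\Gamma^{-1}_{_\theta}\int\frac{\dot{m}_{_\theta}(y)}{m_{_\theta}(y)}h(y)\Lambda_\beta(dy)$, a fixed constant-coefficient combination of the coordinates of $\frac{\mdotx}{\mthetax}$. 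Because the Poisson score is $\psitheta(x,z)=\frac{\mdotx}{\mthetax}(z-\mthetax)$, one has the pointwise identity $C(x;\Pi\gtheta){\frac{\mdotx}{\mthetax}}^{\sf T}=\Etheta\bigl[\Pi\gtheta(x,\nutx)\psitheta^{\sf T}(x,\nutx)\bigr]$, whose integral against $\mu$ is exactly $\langle\Pi\gtheta,\psitheta^{\sf T}\rangle$. This vanishes by the first identity in \eqref{eqn:projection1}, so the $h-\widehat{h}$ contribution is zero and \eqref{eqn:shift_hhat} follows. The non-detectability of tangential deviations is then immediate: substituting $h={\frac{\mdotx}{\mthetax}}^{\sf T}a$ for a constant vector $a$ into \eqref{eqn:hhat} and using $\Gamma_{_\theta}=\int\frac{\mdotx}{\mthetax}{\frac{\mdotx}{\mthetax}}^{\sf T}\Lambda_\beta(dx)$ gives $\widehat{h}\equiv0$, hence zero shift.

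For part (ii), set $\btheta=\psitheta$. Then \eqref{eqn:projectionB} gives $\gtheta-\Pi\gtheta=\langle\gtheta,\psitheta^{\sf T}\rangle\langle\psitheta,\psitheta^{\sf T}\rangle^{-1}\psitheta=:\alpha^{\sf T}\psitheta$, a constant combination of the score. Using the linearity of $C(x;\cdot)$ in its second argument together with the Poisson variance identity $\Etheta[(\nutx-\mthetax)^2]=\mthetax$, I compute $C(x;\gtheta)-C(x;\Pi\gtheta)=C(x;\alpha^{\sf T}\psitheta)=\alpha^{\sf T}\mdotx$. Integrating against $\widehat{h}$ and passing to $\Lambda_\beta$ through \eqref{eqn:asympt_eq}, that is $\mthetax\sim c|\X|\lambda_\beta(x)$ together with $\mu(dx)=dx/|\X|$, gives $\int\alpha^{\sf T}\mdotx\,\widehat{h}(x)\mu(dx)\sim c\,\alpha^{\sf T}\int\frac{\mdotx}{\mthetax}\widehat{h}(x)\Lambda_\beta(dx)$. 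Writing $\theta=(c,\beta)^{\sf T}$ shows $\frac{\mdotx}{\mthetax}$ has constant $c$-block $1/c$ and $\beta$-block $\frac{\dot\lambda_\beta(x)}{\lambda_\beta(x)}$, so the two conditions in \eqref{eqn:orthogonal} are precisely $\int\widehat{h}\,\frac{\mdotx}{\mthetax}\Lambda_\beta(dx)=0$; the integral vanishes and \eqref{eqn:shift_hhat} collapses to \eqref{eqn:shift_MLE}.

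The power comparison closes the argument. Assuming \eqref{eqn:proj1}, under alternatives indexed by $\widehat{h}$ both $\vtheta(\gtheta)$ (known $\theta$) and $\vthetahat(\gthetahat)=\vtheta(\Pi\gtheta)+o_{_P}(1)$ are asymptotically Gaussian with the common limiting shift $\frac{1}{\sqrt{c}}\int C(x;\gtheta)\widehat{h}(x)\mu(dx)$ just established, and with variances $\|\gtheta\|^2$ and $\|\Pi\gtheta\|^2$ respectively. Since $\Pi$ is orthogonal when $\btheta=\psitheta$ (Proposition \ref{prop:projection}), $\|\Pi\gtheta\|^2=\|\gtheta\|^2-\|\gtheta-\Pi\gtheta\|^2\le\|\gtheta\|^2$; equal shift with smaller (or equal) variance gives a larger (or equal) standardized noncentrality, hence weakly higher power. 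The step I expect to be most delicate is the measure bookkeeping in part (ii): the conditions \eqref{eqn:orthogonal} are stated against $\Lambda_\beta$ while the shift integrals are against $\mu$, so one must justify passing $\mthetax\sim c|\X|\lambda_\beta(x)$ inside the integral and the clean $c$/$\beta$ block decomposition of $\frac{\mdotx}{\mthetax}$; the remainder is linear algebra with $\langle\cdot,\cdot\rangle$ and the properties of $\Pi$.
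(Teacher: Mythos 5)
Your proof is correct and follows essentially the same route as the paper's: Proposition \ref{prop:magic_form} for the shift, the projector identities of Proposition \ref{prop:projection} to annihilate the tangential directions spanned by $\frac{\dot{m}_{_\theta}}{m_{_\theta}}$, the orthogonality of $\Pi$ under MLE to identify the shift with \eqref{eqn:shift_MLE}, and the Pythagoras/signal-to-noise comparison of $\|\Pi\gtheta\|^2\le\|\gtheta\|^2$ for the power claim. The only differences are organizational -- you invoke $\langle\Pi\gtheta,\psitheta\rangle=0$ directly via a pointwise identity where the paper expands $C(x;\Pi\gtheta)$ term by term, and you spell out the $\mu$-to-$\Lambda_\beta$ measure conversion and the $(c,\beta)$ block structure that the paper compresses into a proportionality sign.
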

\begin{proof}
From the definition of $C(\cdot; \gtheta)$ in \eqref{eqn:Cg}, it follows that
$$C(x;\Pi \gtheta)=\Etheta \bigl[\Pi\gtheta(x, \nutx)(\nutx - \mthetax)\bigl]$$
while from the definition of the projection in \eqref{eqn:projectionB} we know that $\Pi\gtheta$ is orthogonal to $\psitheta$, see \eqref{eqn:projection1}. If we now multiply both sides of the above equality by $\frac{\dot{m}_\theta}{m_\theta}$ and we integrate with respect to $\muk$, we obtain:
$$\int C(x;\Pi \gtheta) \frac{\mdotx}{\mthetax}\muk(dx)=\int \Etheta \bigl[\Pi\gtheta(x, \nutx)(\nutx - \mthetax)\bigl]\frac{\mdotx}{\mthetax}\muk(dx)=\langle \Pi\gtheta , \psitheta\rangle=0.$$
However, the functions $h$ and $\widehat h$ differ exactly by a linear combination of the coordinates of $\frac{\dot{m}_{_\theta}}{m_{_\theta}}$, see \eqref{eqn:hhat}. Therefore,
$$\int C(x;\Pi \gtheta) h(x)\muk(dx)=\int C(x;\Pi \gtheta) \widehat h(x)\muk(dx) $$
converging to the integral in \eqref{eqn:shift_MLE} as $K\rightarrow \infty$, hence (i) follows.


When $\theta$ is estimated via MLE, according to Proposition \ref{prop:projection},  $\Pi$ is an orthogonal projector parallel to the score function $\psitheta$. Moreover,
$$\int C(x;\psitheta) \widehat{h}(x) \mu(dx) \propto \int   \frac{\dot{m}_{_\theta}(x)}{m_{_\theta}(x)}  \widehat{h}(x) \Lambda_{\beta}(dx)=0. $$
Hence, for sequences of alternatives defined by a function $\widehat{h}$, the limit of both $\Etilde [\vtheta(\gtheta)]$ and $\Etilde [\vtheta(\Pi \gtheta)]$ is equal to \eqref{eqn:shift_MLE}.
Under the null hypothesis, the  variance of $\vtheta(\Pi\gtheta)$ is 
$$\|\Pi \gtheta\|^2  \le \|g\|^2 $$
and, from Proposition \ref{prop:magic_form}, it is asymptotically equal to $\Vtilde[\vtheta(\Pi\gtheta)]$. 
Therefore, in the limit, the signal-to-noise ratio for $\vthetahat(\gthetahat)$ is greater than or equal to that of $\vtheta(\gtheta)$ and, since both statistics are asymptotically Gaussian,  statement (ii) follows.  \qed
\end{proof}
As noted in Section \ref{sec:alternatives}, the first statement of Proposition \ref{prop:powerMLE} confirms that,  when the parameters are estimated, only alternatives of the form in \eqref{eqn:hhat} can be detected by divisible statistics \sara{-- that is, we have no power in detecting deviation within the parametric family since those are already captured by parameter estimation.}  The second statement suggests that, regardless of the choice of $\gtheta$, even when $\theta$ is known, estimating it via MLE can lead to higher power against alternatives for which $h=\widehat{h}$.   One could easily quantify the power gain induced by MLE by using the Gaussian approximation of divisible statistics. The validity of Proposition \ref{prop:powerMLE} (ii) in finite samples will be illustrated through a numerical example in Section \ref{sec:linear}.


\subsection{No single divisible statistic is adequate for goodness-of-fit}
\label{sec:loss_power}
The limiting shifts in \eqref{eqn:shift_g} and \eqref{eqn:shift_projection} are linear functionals in $L_2(\mu)$; 
hence, for any $h$ lying in the subspaces of $L_2(\mu)$ annihilated by these functionals, the limiting power will be equal to the significance level. 

For instance, for the (centered) Pearson's $\chi^2$ statistic in \eqref{eqn:pearson} we have that $C(x;\gtheta)=1$ for all $x\in \X$. Hence, when $\theta$ is estimated via MLE,  the shift in \eqref{eqn:shift_MLE} reduces to
\begin{equation}
\label{eqn:int_hdx}
 \frac{1}{\sqrt{c}}\int \widehat{h}(x)\mu(dx).
\end{equation}
Therefore,  Pearson's $\chi^2$ has no \sara{asymptotic} power against alternatives approaching the null from a direction $\widehat{h}$ orthogonal, with respect to the Lebesgue measure, to constant functions. 
Equation \eqref{eqn:int_hdx} also implies that no departures from uniformity can be detected by Pearson's statistic, a problem originally pointed out by \citet{chibisov65}. Unfortunately, this fallacy is not limited to the case of Pearson's $\chi^2$ but, as demonstrated in Section \ref{sec:spacehomogeneous},  it affects an entire subclass of divisible statistics.



\textbf{\emph{Examples I-III (continued).}} When $c$ and $\beta$ are estimated via MLE,  for all three examples, the power of Pearson's $\chi^2$, obtained by means of $100,000$ Monte Carlo replicates, is approximately equal to the significance level, here chosen to be $5\%$. 
 As noted above, when the parameters are estimated, only the alternatives defined by the functions $\widehat{h}_j$, $j=1,2,3$ (red chained lines in Fig. \ref{fig1}) are detectable by divisible statistics. In Example III, the effect of $\widehat{h}_3$ is rather faint. Thus, regardless of the choice of $\gtheta$, no divisible statistic is likely to detect such a deviation. In Examples I-II, both alternatives defined by $\widehat{h}_1$ and $\widehat{h}_2$  are distinguishable from the null. When using Pearson's $\chi^2$,  however, the limiting shift is  $-0.014$ in Example I and $-0.019$ in Example II, the standard deviations are, respectively, 1.406 and 1.416. This tells that,  despite $\widehat{h}_j $, $j=1,2$  being not so small, their integrals are, thereby leading to a loss of power.  

Although divisible statistics do not satisfy Definition \ref{def:gof_propetry}, in some situations scientists may be willing to sacrifice adequacy for goodness-of-fit to ensure high power against a few alternatives relevant to a given application. In the construction of such tests, the structure of the shift in \eqref{eqn:shift_projection} can be used to verify that, for a given choice of $\gtheta$, power is preserved in the directions defining such alternatives.

\subsection{On the impossibility of testing uniformity via \sara{$C$-homogeneous} divisible statistics}
\label{sec:spacehomogeneous}
In the previous sections, we saw some implications of the fact that the shift of $\vthetahat(\gthetahat)$  is a linear functional from $\widehat{h}$. In particular, from Proposition \ref{prop:powerMLE} (i), we have that this linear functional is not defined by the function $\gtheta$ but by  $C(\cdot;\Pi \gtheta)$. This leads to a phenomenon that does not have an analogy in the classical theory of empirical processes. What can now occur is the following: a non-zero function $\gtheta$ produces a non-zero divisible statistic, but if $C(x;\Pi\gtheta)=0$  for all $x\in\X$, then the shift of this statistic is zero for all alternatives.

One situation in which this phenomenon arises is when   $\vtheta(\gtheta)$  belongs to the class of  \emph{\sara{$C$-homogeneous}} divisible statistics defined as follows. 

\begin{definition} We say that $\vtheta(\gtheta)$ is a {\it \sara{$C$-homogeneous}} divisible statistic, if the function $C(\cdot;\gtheta)$ in \eqref{eqn:Cg} is constant.
\end{definition}
For example, regardless of the model being tested, both the (centered) Pearson's statistic  
and the statistic  
\sara{\begin{equation}
\label{eqn:linear_stat}
\frac{1}{\sqrt{K}}\sum_{k=1}^K\frac{\nu(x_k)-\mtheta}{\mtheta}
\end{equation}}
are \sara{$C$-homogeneous}. 

When $\lambda_\beta$ is uniform, the class of \sara{$C$-homogeneous} statistics is rather broad. In this case, the frequencies $\{\nu(x_k)\}_{k=1}^K$ are all identically distributed Poisson random variables with mean $\mtheta=c$ for all $k=1,\dots,K$. Therefore, if $\gtheta(\nu(x_k), \mtheta)= g(\nu(x_k), c)$ does not depend on $x_k$ explicitly, then these random variables are also identically distributed for all $k=1,\dots,K$ and thus  $C(\cdot;\gtheta)$ is constant.
This brings us to the following statement:
\begin{proposition}
\label{prop:unif}
Suppose \eqref{eqn:shift_g} is valid, $\lambda_\beta$ is the uniform density on $\X$, and $c$ is estimated with MLE. Then, for any \sara{$C$-homogeneous} divisible statistics,
$$\Etilde  [\vtheta(\Pi \gtheta)] \sim 0,$$
that is, no such statistic has any asymptotic power for any $h$.    
\end{proposition}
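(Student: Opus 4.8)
The plan is to chain together three ingredients already established in the excerpt: the MLE form of the limiting shift from Proposition~\ref{prop:powerMLE}(ii), the constancy of $C(x;\gtheta)$ that defines space-homogeneity, and the zero-mean property of the effective direction $\widehat{h}$. The decisive observation is that, under uniformity of $\lambda_\beta$, the measure $\Lambda_\beta$ collapses onto the normalized Lebesgue measure $\mu$, so that the centering of $\widehat{h}$ with respect to $\Lambda_\beta$ becomes exactly a centering with respect to $\mu$.

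First I would record that if $\lambda_\beta$ is uniform on $\X$ then $\lambda_\beta(x)=1/|\X|$, and hence $\Lambda_\beta(A)=|A|/|\X|=\mu(A)$ for every $A\subseteq\X$; that is, $\Lambda_\beta=\mu$. Next, since $c$ is estimated by MLE, Proposition~\ref{prop:powerMLE}(ii) gives the limiting shift of $\vtheta(\Pi\gtheta)$ in the form \eqref{eqn:shift_MLE}, namely $\frac{1}{\sqrt{c}}\int C(x;\gtheta)\,\widehat{h}(x)\,\mu(dx)$, with $C(x;\gtheta)$ as in \eqref{eqn:Cg}. By the definition of a space-homogeneous statistic, $C(x;\gtheta)\equiv C_0$ is constant in $x$, so the shift equals $\frac{C_0}{\sqrt{c}}\int \widehat{h}(x)\,\mu(dx)$.

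It then remains to see that $\int \widehat{h}(x)\,\mu(dx)=0$. By construction, $\widehat{h}$ satisfies both conditions in \eqref{eqn:orthogonal}; in particular $\int \widehat{h}(x)\,\Lambda_\beta(dx)=0$. Using the identification $\Lambda_\beta=\mu$ from the first step, this is precisely $\int \widehat{h}(x)\,\mu(dx)=0$, so the shift vanishes for every $h$ (equivalently, for every associated $\widehat{h}$). Finally, because Proposition~\ref{prop:magic_form} shows that the asymptotic law of the statistic under the alternatives is Gaussian, differing from the null only through this shift, a zero shift forces the limiting power to coincide with the significance level, which is the claim.

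The computation here is immediate once the pieces are assembled; there is no delicate estimate to grind through. The only point requiring care -- and the genuine content of the statement -- is the coincidence $\Lambda_\beta=\mu$ under uniformity, which is exactly what transfers the $\Lambda_\beta$-orthogonality of $\widehat{h}$ (its vanishing integral) into the $\mu$-integral appearing in the shift. Without uniformity the two measures differ, a constant $C_0$ can no longer be pulled through to meet a vanishing $\mu$-integral, and the conclusion fails; this is precisely why the result is confined to the uniform case and to space-homogeneous summands.
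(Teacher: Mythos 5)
Your proof is correct, and it reaches the conclusion by a route that is dual to the paper's. The paper works on the statistic side: using the expansion of $C(x;\Pi\gtheta)$ from \eqref{covar_c}, together with the fact that uniformity makes $\mthetax$ constant (so the score is $(z-c)/c$ and the projection reduces to a centering), it shows the pointwise identity
$$C(x;\Pi\gtheta)=C(x;\gtheta)-\int C(y;\gtheta)\,\muk(dy)=0 \quad\text{for all } x\in\X,$$
so the shift in \eqref{eqn:shift_projection} vanishes against the raw direction $h$, with no appeal to $\widehat{h}$. You instead work on the alternative side: you take the MLE reduction of Proposition \ref{prop:powerMLE}(ii), pull the constant $C_0$ out of the integral by space-homogeneity, and then kill $\int\widehat{h}\,d\mu$ by identifying $\Lambda_\beta=\mu$ under uniformity and invoking the first orthogonality condition in \eqref{eqn:orthogonal}. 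Both arguments lean on the same projection machinery already established in Proposition \ref{prop:powerMLE}; the difference is whether the projection is applied to the coefficient function $C$ or transferred (by adjointness) to the direction $h\mapsto\widehat{h}$. The paper's version is marginally stronger and more informative: $C(x;\Pi\gtheta)\equiv 0$ says the projected statistic behaves exactly like the useless class $\gtheta^\perp$ of Section \ref{sec:linear}, which connects to the discussion preceding the proposition. Your version buys a cleaner conceptual reading -- constancy of $C$, coincidence of measures, orthogonality of $\widehat{h}$ -- and makes explicit (via Proposition \ref{prop:magic_form}) why zero shift implies power equal to level, a step the paper leaves implicit. One pedantic note: in this setting the only parameter is $c$, so $\widehat{h}$ is simply $h-\int h\,d\Lambda_\beta$ and only the first condition in \eqref{eqn:orthogonal} is in play; your phrase ``both conditions'' is harmless but vacuous for the $\beta$-part.
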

\begin{proof}
Since $C(\cdot;\gtheta)$ is constant and $\lambda_\beta(x)=|\X|^{-1}$ for all $x\in\X$, we have
$$ C(x;\Pi \gtheta) = C(x;\gtheta)-\int C(x;\gtheta) \muk(dy)  = 0.$$
It follows that, for any $h$, the limiting shift in \eqref{eqn:shift_projection} is zero.
  \qed
\end{proof}
\sara{Proposition \ref{prop:unif} tells us that, although uniformity is a relevant null hypothesis in many practical settings, under \eqref{eqn:A2}, $C$-homogeneous statistics are unable to detect small departures from it. }


\subsubsection{Unsuitability of \sara{$C$-homogeneous} statistics in testing constant background in $X$-ray spectra}
\label{sec:chandraIb}
When analyzing the source-free $X$-ray spectrum from Chandra introduced in Section \ref{sec:chandraI}, Pearson's statistic failed to reject the hypothesis of a constant background model. As a follow-up study, consider the weighted linear statistic
in \eqref{eqn:linear_stat} 
and the (centered) likelihood ratio -- known in astronomy as `Cash statistic' \citep{cash} -- defined by the function 
$$\gtheta(x,z)=z\ln z-\Etheta[\nutx \ln \nutx]-(z-\mthetax)(1+\ln\mthetax).$$
The importance of \eqref{eqn:linear_stat} will be discussed in the next section.
When testing the hypothesis $H_0:\mthetax=c$ for all $x\in\X$, for both statistics, $C(\cdot;\gtheta)$ is constant and $C(\cdot;\Pi\gtheta)$ is identically equal to zero. The p-values obtained by means of a parametric bootstrap simulation involving 100,000 replicates are $0.145$ for the weighed linear statistic and $0.987$ for the likelihood ratio. As expected,  neither statistic rejected the constant background model. 
\sara{
Indeed, the deviations of the true background model from uniformity could not be detected by any \sara{$C$-homogeneous} divisible statistics.} Nevertheless, as shown in Section \ref{sec:recovering_GOF}, they can be detected when using omnibus functionals of the process involving their partial sums.

\subsection{Every divisible statistic is dominated by a weighted linear statistic. }
\label{sec:linear}
The situation in which $C(\cdot;\Pi \gtheta)$ is identically equal to zero for non-zero choices of $\gtheta$ can be retrieved in most settings, not only in the case of \sara{$C$-homogeneous} statistics. 

To see how this can happen, consider the structure of the projection operator in \eqref{eqn:projectionB}. At first glance, choosing $\gtheta$ orthogonal to the score function -- i.e., such that $\langle \gtheta, \psitheta\rangle = 0$ -- may seem advantageous: for all such $\gtheta$, $\Pi \gtheta = \gtheta$, thus, statistics $\vthetahat(\gthetahat)$ will be asymptotically equivalent to $\vtheta(\gtheta)$ and, therefore, their asymptotic behaviour will not depend on whether we estimate $\theta$ or not.
However, the condition $\langle \gtheta, \psitheta\rangle = 0$ is achieved as soon as $\gtheta (x_k, \nut)$ and $\nut$ are uncorrelated for every given $x_k$, i.e.,
$$C(x_k, \gtheta) = \Etheta\bigl[\gtheta (x_k,\nut) (\nut - \mtheta)\bigl] = 0,$$
which then leads to a zero limiting shift of the statistic $\vthetahat(\gthetahat)$.
Let us understand the situation better. 
In particular, let us decompose a given $\gtheta$ into the parts orthogonal to and collinear with $\nutx -  \mthetax$:
\begin{equation}
\label{eqn:gparallel}
\gtheta^{\perp}(x,z) = \gtheta(x,z) - \frac{C(x, \gtheta)}{\mthetax}(\nutx -  \mthetax), \quad \gtheta^{\parallela}(x,z) = \frac{C(x, \gtheta)}{\mthetax}(\nutx -  \mthetax).     
\end{equation}
Since $C(x, \gtheta^\perp)=0$ for all $x\in \X$, unbinding $\vtheta(\gtheta)$ from the extra randomness due to   $\vtheta(\gtheta^\perp)$ allows us to preserve the asymptotic shift while simultaneously reducing the variance. As formalized in Proposition \ref{prop:linear}, the same holds true when $\theta$ is estimated via MLE. 
\sara{
\begin{proposition}
\label{prop:linear}
Assume that the asymptotic representation in \eqref{eqn:proj1} is valid and $\theta$ is estimated via MLE. Then, for any divisible statistic defined by a function $\gtheta$ we have that:
\begin{itemize}
\item[(i)] If $\gtheta$ is uncorrelated with $(z-\mthetax)$, i.e., $\gtheta\equiv \gtheta^{\perp}$, then,  for any sequence of converging contiguous alternatives,  $\vthetahat(\gthetahat)$ has no limiting power.
\item[(ii)] If $\gtheta\not\equiv \gtheta^{\perp}$, then,  for any sequence of converging contiguous alternatives, the limiting power of $\vthetahat(\gthetahat^{\parallela})$ is greater or equal than that of $\vthetahat(\gthetahat)$ with equality holding when $\gtheta^{\perp}$ is identically equal to zero.
\end{itemize}
\end{proposition}
}
\begin{proof}
 Recall that $\vthetahat(\gthetahat)$ is asymptotically equivalent to $\vtheta(\Pi \gtheta)$ and consider the expansion
$$\vtheta(\Pi \gtheta) = \vtheta(\Pi g^\perp_{_\theta}) + \vtheta(\Pi \gtheta^{\parallela}).$$
Now notice that, since $\langle g^\perp_{_\theta},\psitheta^{\sf T}\rangle =0$,
\begin{equation}\label{covar_c}
\begin{split}
C(x;\Pi g^\perp_{_\theta}) = C(x;g^\perp_{_\theta}) - \langle g^\perp_{_\theta},\psitheta^{\sf T}\rangle \langle \btheta, \psitheta^{\sf T}\rangle^{-1} C(x; \btheta)=0.
\end{split}\end{equation} 

This fact and  Proposition \ref{prop:powerMLE} imply that $\Etilde \bigl[\vtheta(\Pi g^\perp_{_\theta})\bigl]  \sim 0$ for all alternatives. Therefore, \sara{$\vtheta(\Pi g^\perp_{_\theta})$ has no limiting power, which proves (i).} At the same time, $\Etilde \bigl[\vtheta(\Pi \gtheta) \bigl]$ and $ \Etilde\bigl[\vtheta(\Pi \gtheta^{\parallela})\bigl]$ have the same limit, i.e.,
$$ \frac{1}{\sqrt{c}}\int C(x;\Pi \gtheta^{\parallela})\widehat{h}(x) \mu(dx) = \frac{1}{\sqrt{c}}\int C(x;g^{\parallela}_{\theta})\widehat{h}(x) \mu(dx) ,$$
where the equality follows from Proposition \ref{prop:powerMLE}(ii). 
At the same time, from the equality in \eqref{eqn:projectionB} we have $\Pi\gtheta^\perp=\gtheta^\perp$ and if $\theta$ is estimated via MLE, $\btheta=\psitheta$ and $\Pi$ is orthogonal, i.e., it is a self-adjoint projector. Therefore,
$$\langle\Pi \gtheta^\perp,\Pi\gtheta^{\parallela}\rangle=\langle \Pi^2\gtheta^{\perp},\gtheta^{\parallela}\rangle=\langle\gtheta^{\perp},\gtheta^{\parallela}\rangle=0.$$
Since $\Pi\gtheta^{\perp}$ and $\Pi\gtheta^{\parallela}$ are orthogonal, the limiting variance of $\vtheta(\Pi \gtheta^{\parallela})$ is 
$$ \| \Pi \gtheta^{\parallela}\|^2=\| \Pi \gtheta\|^2-\| \Pi \gtheta^{\perp}\|^2\leq \|\Pi \gtheta\|^2.$$
Hence, the asymptotic power of $\vthetahat(\gthetahat^{\parallela})$ always exceeds or is equal to that of $\vthetahat(\gthetahat)$, \sara{with  equality holding when $\gtheta\equiv\gtheta^\perp$ or $\gtheta\equiv\gtheta^{\parallela}$, which proves (ii).} \qed
\end{proof}
Let us now investigate with an example if, as stated in Proposition \ref{prop:linear}, weighted linear statistics are indeed better and if, as proved in Proposition \ref{prop:powerMLE}, estimation via MLE yields higher power.\\

\sara{
\textbf{\emph{Example IV.}} 
Let the hypothesized $\lambda_\beta$ be a normal density with mean $\beta$, and variance $\sigma^2$, truncated over the interval $\X=[0,a]$. 
Choose $a=1$, albeit the specific numerical value plays little role. Also, in the numerical calculations, we assume $\beta=0.5$ and $\sigma^2=0.04$.
To test our hypothesis, begin with Pearson's $\chi^2$ in \eqref{eqn:pearson}. Since for this statistic, $C(x_k;\gtheta)=1$ for all $x_k$, the `better' weighted linear statistic, $\vtheta(\gparallel)$ in \eqref{eqn:gparallel} specifies as in \eqref{eqn:linear_stat}.

To evaluate the power, we consider the most straightforward choice of the alternative: the same truncated normal density but with variance somewhat different from $\sigma^2$. It can be shown that the functional direction corresponding to alternatives with variance approaching $\sigma^2$ is given by
\label{eqn:h_normal}
\begin{equation*}
h_4(x)=\frac{b}{2\sigma^2}\biggl[\frac{(x-\beta)^2}{\sigma^2}-\int_0^a\frac{(t-\beta)^2}{\sigma^2}\Lambda_\beta(dt)\biggl]
\end{equation*}
for some constant $b$, here chosen equal to $1/\sqrt{2}$.
The score function of the density $\lambda_\beta$ is
$$\frac{\dot\lambda_\beta(x)}{\lambda_\beta(x)} = \frac{x-\beta}{\sigma^2}-\int_0^a\frac{t-\beta}{\sigma^2}\Lambda_\beta(dt)$$
and  one can see that $h_4$ satisfies both orthogonality conditions in \eqref{eqn:orthogonal} for $\beta=0.5$ -- in alignment with the normal model where the derivative of the log-likelihood with respect to the mean and the variance are orthogonal. 

The power of the two statistics considered is compared using a numerical simulation involving 100,000 Monte Carlo replicates with $K=100, c=5$,  and choosing the significance level to be $5\%$. When $\theta$ is known, the power of the Pearson's $\chi^2$ is approximately $7.16\%$; whereas, for $\vtheta(\gparallel)$, the power increases to $10.53\%$. As expected, if we free ourselves from the extra randomness due to $\vtheta(\gtheta^\perp)$, the power increases.
When $\theta$ is estimated via MLE, the power of the $\chi^2$ statistics is $7.38\%$ and that of $\vthetahat(\gparallelhat)$ increases to $14.09\%$ -- that is, compared to $\vtheta(\gtheta)$,  the combined effect of the use of $\gparallel$ and the MLE doubles the power. 

What was observed above in the case of Pearson's statistic is not an isolated instance. Consider, for example, the spectral statistic
\begin{equation}
\label{eqn:spectral}
\frac{1}{\sqrt{K}}\sumk [\mathds{1}_{\{\nut\leq q\}}-P(q|\mtheta)],\quad\text{with $q \in   \Real$,}
\end{equation}
for which $C(x;\gtheta)=-\mthetax p(q-1|\mthetax)$; thus, $\vthetahat(\gtheta^{\parallela})$ specifies as
\begin{equation}
\label{eqn:vgparallel_spectrum}
\frac{1}{\sqrt{K}}\sumk p(q-1|\mtheta)(\nut-\mtheta).
\end{equation}
Choose  $q=1$, $K=100$, $c=5$, and let the significance level be $5\%$. When $\theta$ is known, the simulated power of \eqref{eqn:spectral} obtained using $100,000$ Monte Carlo replicates is $7.8\%$ and increases to $12.53\%$  for the better statistic in \eqref{eqn:vgparallel_spectrum}. 
When  $\theta$ is estimated via MLE, the power increases to $10.16\%$ for \eqref{eqn:spectral} and to $13.49\%$ for  \eqref{eqn:vgparallel_spectrum}. This tells us that, when the information on the frequencies $\{\nut\}_{k=1}^K$ is available and is incorporated into the analysis through the use of $\gparallel$ and  MLE estimation, it almost doubles the power of the classical spectral statistic.
}

\section{Goodness-of-fit via divisible statistics}
\subsection{On the construction of  goodness-of-fit tests based on partial sums}
\label{sec:recovering_GOF}
Despite the advantages given by parameter estimation against alternatives of the form in \eqref{eqn:hhat} and the use of $\gtheta^{\parallela}$  in \eqref{eqn:gparallel} to increase the power of single divisible statistics, neither of these phenomena guarantees their adequacy for goodness-of-fit. 
Nevertheless,  power can be restored by using a family of divisible statistics.

As shown by \citet{khm84} for the case of simple hypotheses, one possible choice is the process of partial sums 
\begin{equation}
    \label{eqn:partialsums}
    \vtheta(\gtheta \mathds{1}_A)= \int_A\hspace{-0.05cm}\int \gthetaxy\vtheta(dx,dz)=\frac{1}{\sqrt{K}}\sum_{x_k\in A} \gthetak,
\end{equation}
in which  $\gtheta \mathds{1}_A$  denotes functions of the form $\gthetaxy\mathds{1}_{\{x\in A\}}$. 
When testing parametric hypotheses, the process of interest is
\begin{equation}
    \label{eqn:partialsums_theta}
    \vthetahat(\gthetahat \mathds{1}_A)= \vtheta\bigl(\Pi\gtheta\mathds{1}_A\bigl)+o_{_P}(1)
\end{equation}
in which the functions $\Pi\gtheta\mathds{1}_A$ can  be written explicitly as
\[\Pi\gthetaxy\mathds{1}_{\{x\in A\}}=\gthetaxy\mathds{1}_{\{x\in A\}}-\langle \gtheta\mathds{1}_A,\psitheta^{\sf T}\rangle\langle \btheta, \psitheta^{\sf T} \rangle^{-1} \bxy.\]

In this paper, we choose the sets $A$  to be members of the \emph{scanning family} \citep[cf.][]{khm93}, i.e., an increasing sequence of subsets, $\mathcal{A}=\{A_t\}_{0\leq t\leq 1}$, of $\X$ such that
\begin{enumerate}
    \item[(a)] $ A_s\subseteq A_{t} \text{ if } s\leq t$;
    \item[(b)] $ |A_0|=0, |A_1|=|\X|$;
    \item[(c)] $|A_t|$ is absolutely continuous in $t$.
    \end{enumerate}
It follows that one can label the functions in $\mathcal{L}_2(\mutheta)$  indexing the processes in \eqref{eqn:partialsums}-\eqref{eqn:partialsums_theta}  using the one-dimensional parameter $t\in[0,1]$.  As shown in Section \ref{sec:recovering_distr_free}, this also simplifies the construction of asymptotically distribution-free goodness-of-fit tests.

The process $\vtheta(\gtheta \mathds{1}_A)$ consists of partial sums of independent random variables; hence, it converges weakly to a Brownian motion \citep[cf.][]{pyke83,pyke86}. The projection 
$\vtheta\bigl(\Pi\gtheta\mathds{1}_A\bigl)$, on the other hand, converges weakly to a projected Brownian motion.  
Under the null,   the mean function of such processes is equal to zero for all $A\in \mathcal{A}$; whereas their shift under the alternatives can be derived by replacing  $\gtheta$ with $\gtheta\mathds{1}_A$ in \eqref{eqn:shift_g} and \eqref{eqn:shift_projection}. For instance,
\eqref{eqn:shift_g} becomes
 \begin{equation*}
         \label{eqn:shiftA}
\frac{1}{\sqrt{c}}\int_A C(x;\gtheta) h(x) \mu(dx).
     \end{equation*}
 Since the integral is taken over all $A\in \mathcal{A}$,    it is identically equal to zero only if $C(x;\gtheta)=0$ for all $x\in\X$ or if the functions  $h$ and $C(\dot;\gtheta)$ have exotic mutual alignment. Therefore, in practice, any choice of $\gtheta$ that differs from $\gtheta^{\perp}$ (cf. Section \ref{sec:linear}) would give a non-zero shift over at least some $A\in \mathcal{A}$.
Consequently, goodness-of-fit tests satisfying Definition \ref{def:gof_propetry} can be constructed by considering omnibus functionals of the partial sum processes in \eqref{eqn:partialsums}-\eqref{eqn:partialsums_theta} as test statistics. 

For example, let us start with a collection of hyperrectangles
 \begin{equation*}
\label{eqn:setsy}
(-\infty,y_t]=(-\infty,y_{1t}]\times\dots\times (-\infty,y_{dt}]
\end{equation*}
with $y_t=(y_{1t},\dots,y_{dt})\in\X$ such that, if $s\leq t$,  $y_{is}\leq y_{it}$, for all $i=1,\dots,d$. Then choose $A_t$ to be $(-\infty,y_t]\cap\X$. A possible test statistic for parametric goodness-of-fit is
\begin{equation}
    \label{eqn:KShat}
    \begin{split}
    \widehat{\text{KS}}=\max_{t}\bigl|\vthetahat (\gthetahat^{\parallela} \mathds{1}_t)\bigl| 
    \end{split}
\end{equation}
in which  $\mathds{1}_t$  denotes the indicator function    $\mathds{1}_{\{x\in (-\infty,y_t]\cap\X\}}$  and $\gtheta^{\parallela}$ is constructed as in \eqref{eqn:gparallel}.
The statistic in \eqref{eqn:KShat} is the Kolmogorov-Smirnov statistic for testing parametric hypotheses in the binned data regime. 

\emph{\textbf{Example IV (continued).}}
Consider the $\widehat{\text{KS}}$ statistics given by the maxima of the cumulative sums of the summands of the divisible statistics in \eqref{eqn:linear_stat} and \eqref{eqn:vgparallel_spectrum}, i.e.,
\begin{equation}
\label{eqn:KSss1}
\max_{t}\biggl|\frac{1}{\sqrt{K}}\sum_{x_k\leq y_t} \frac{(\nut-\mthetahat)}{\mthetahat}\biggl|
\end{equation}
and
\begin{equation*}
\label{eqn:KSss2}
\max_{t}\biggl|\frac{1}{\sqrt{K}}\sum_{x_k\leq y_t} p(q-1|\mthetahat)(\nut-\mthetahat)\biggl|
\end{equation*}
with $\{x_1,\dots,x_K\}$ linearly ordered, $y_t\in \{x_1,\dots,x_K\}$, and  $\theta$ estimated via MLE. 
Under the same setup described in Section \ref{sec:linear}, such statistics have power 
$16.47\%$ and $15.44\%$, respectively -- the power increases compared to  \eqref{eqn:linear_stat} and \eqref{eqn:vgparallel_spectrum}. 
 However, we should not expect this result to be true in general. Specifically, a test based on functionals of the partial sums process should be preferred to a test based on a single divisible statistic if we want a single test that can detect many different alternatives. Nevertheless, it is possible that, for some alternatives, a single divisible statistic may exhibit higher power than functionals of the partial sums process. 


\subsubsection{On the detection of non-constant background shapes in $X$-ray source-free spectra}
\label{sec:chandraII}
While in the analysis of $X$-ray spectra, \sara{weak} departures from uniformity cannot be detected by \sara{$C$-homogeneous} statistics (cf. Section \ref{sec:spacehomogeneous}),  Section \ref{sec:recovering_GOF} suggests that power can be retrieved through partial sums.

Considering the source-free Chandra spectrum described in Section \ref{sec:chandra_intro}, we tested the hypothesis of constant background using the $\widehat{KS}$ statistic constructed as in \eqref{eqn:KSss1} with $\mthetahat=\widehat{c}$, for all $k=1,\dots,K$. A parametric bootstrap simulation with 100,000 replicates yielded a p-value of $0.008$. Hence,  the uniform background assumption is rejected.

\begin{figure}[htb]
\centering
\makebox{   \includegraphics[width=66mm]{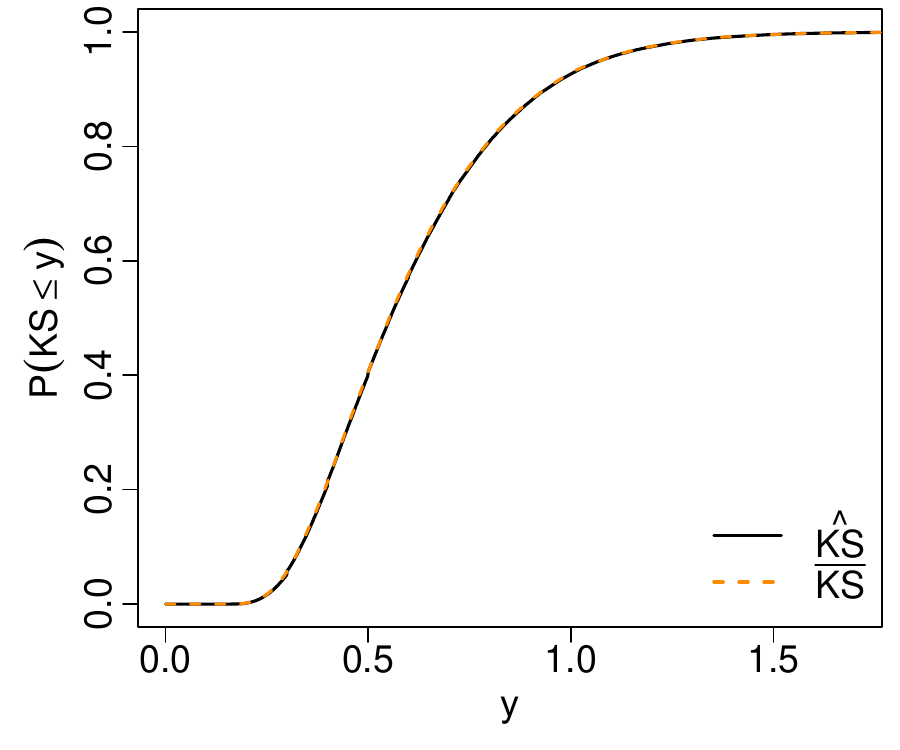} \includegraphics[width=100mm]{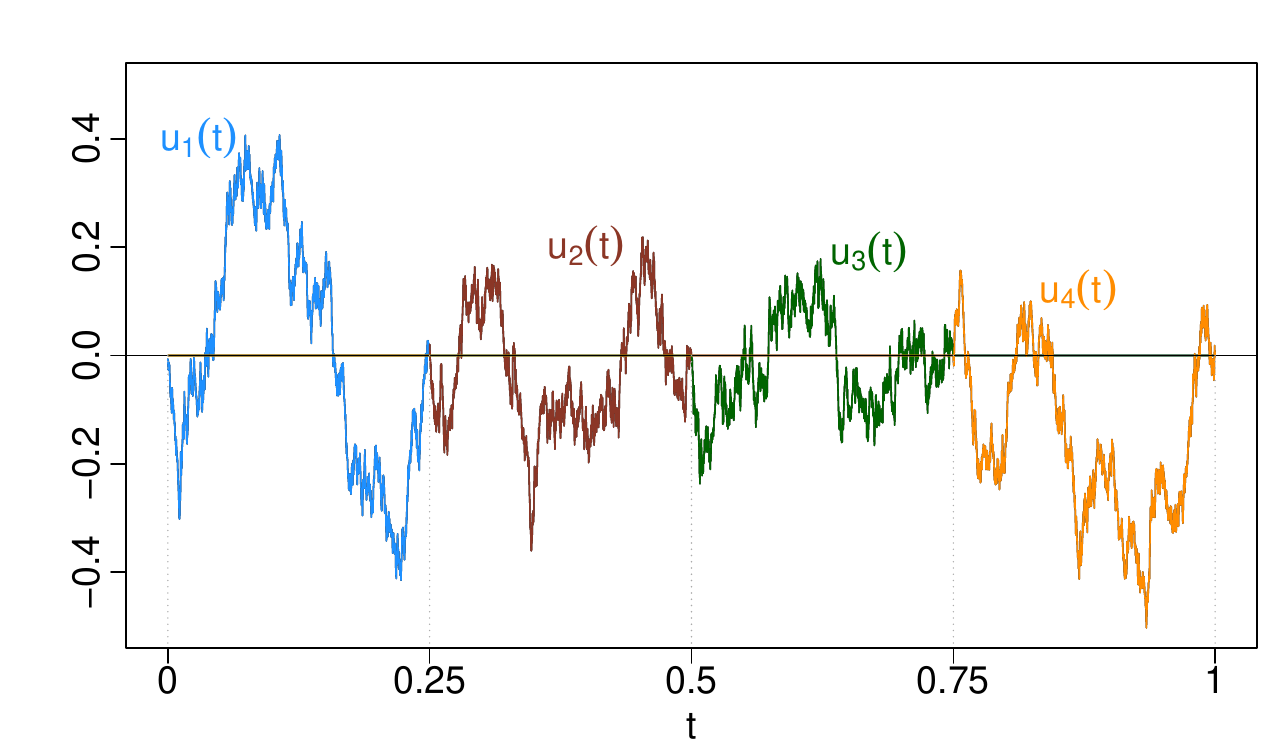}} 
\caption[Fig. 3]{
Left Panel: Comparing the bootstrapped distribution of the Kolmogorov-Smirnov statistics $\widehat{\text{KS}}$ and  $\overbar{\text{KS}}$, as defined in \eqref{eqn:KShat} and \eqref{eqn:KStilde}, using $100,000$ replicates. Right panel: A realization of the limiting process of  $\vtheta (U_p\Pi_r \ellthetat)$ given by a succession of rescaled independent standard Brownian bridges, defined as in \eqref{bridges}, when $p=4$. }
\label{fig3}
\end{figure}
The same testing procedure has been implemented to test the linearity of $m_{_\theta}$. The linear mean function estimated via MLE is displayed as a red-dashed line in Figure \ref{fig:RTCru_spectrum}. The bootstrap p-value obtained when testing such a model is $0.049$, on the edge of being rejected at a $5\%$ significance level.

While identifying a suitable model for this spectrum is beyond the scope of this manuscript, we tested a piecewise linear model comprising a linear function up to $15.6\r{A}$  and remaining constant thereafter.  The resulting fit is shown as a green-chained line in Figure \ref{fig:RTCru_spectrum}. When considering the $\widehat{KS}$ statistic in \eqref{eqn:KSss1}, the bootstrap p-value obtained amounts to $0.43$. This indicates that introducing a breakpoint at $15.6 \r{A}$  to separate the constant and linear components may provide a better representation of the background distribution in the $14.6-17.4 \r{A}$  range than models previously proposed in the literature.

\subsection{The projected parametric bootstrap}
\label{sec:bootstrap}
While the construction of asymptotically distribution-free tests for parametric hypotheses is possible  (cf. Section \ref{sec:recovering_distr_free}), one can alternatively derive the null distribution of functionals of $\vthetahat(\gthetahat\mathds{1}_A)$   via the parametric bootstrap. \sara{In standard settings, the consistency of the latter under \eqref{eqn:A2} has been established by several authors \citep[e.g.,][]{simonoff,sauermann}. Given the regularity of our framework, here, we are less concerned with the consistency of the parametric bootstrap; rather, we focus on demonstrating that the computational effort required by such a simulation procedure can be substantially reduced when using the projected process of partial sums. }

Let us focus on the situation in which the process of partial sums is indexed by functions $\gtheta^{\parallela}\mathds{1}_A$, and the parameters are estimated via MLE. Similar considerations can be made for any other choice of $\gtheta$ and other estimators solving the estimating equations in \eqref{eqn:general_est_eq}.

Denote with $\thetahat_{\text{obs}}$ the estimate of $\theta$ obtained via MLE on the set of data collected by the experiment. We are interested in simulating the null distribution of the process 
\begin{equation}
\label{eqn:main_process}
\vthetahat(\gthetahat^{\parallela}\mathds{1}_A)=\frac{1}{\sqrt{K}}\sum_{x_k\in A}\frac{C(x_k;\gthetahat)}{\mthetahat}(\nut-\mthetahat)
\end{equation}
and its functionals. 
To account for the randomness associated with  $\thetahat$, the classical parametric bootstrap entails maximizing the loglikelihood on each bootstrap sample and re-evaluating $\vthetahat(\gthetahat^{\parallela}\mathds{1}_A)$ at each bootstrap estimate of $\theta$.

Let us now consider the projected process of partial sums  
\begin{equation*}
\label{eqn:vgparallel}
\vtheta(\Pi\gtheta^{\parallela}\mathds{1}_A)=\vtheta(\gtheta^{\parallela}\mathds{1}_A) - \sum_{j=1}^p\langle \gtheta^{\parallela}\mathds{1}_A,s_j\rangle \vtheta(s_j).
\end{equation*}
Such a process is asymptotically equal to 
$\vthetahat(\gthetahat^{\parallela}\mathds{1}_A)$  (cf. Section \ref{sec:estimators}), however, from a computational standpoint, simulating the null distribution of $\vtheta(\Pi\gtheta^{\parallela}\mathds{1}_A)$  is more advantageous than simulating that of  $\vthetahat(\gthetahat^{\parallela}\mathds{1}_A)$. That is because, in the former case, one needs not to re-estimate $\theta$ on each bootstrap sample. Instead, the randomness induced by parameter estimation is accounted for through the second term on the right-hand side of the above expression in which the inner products $\langle \gtheta^{\parallela}\mathds{1}_A,s_j\rangle$ need only to be computed once at $\theta=\thetahat_{\text{obs}}$. 

\textbf{\emph{Numerical study.}} The computational benefits of the projected empirical process, particularly in the context of testing multivariate distributions in the i.i.d. setting, have been discussed in \cite{algeri22}.
To illustrate this aspect in the binned data regime, we \sara{use} the setup of Example IV introduced in Section \ref{sec:linear}. We compare the bootstrapped null distribution of the Kolmogorov-Smirnov statistics in \eqref{eqn:KShat} with that of 
\begin{equation}
\label{eqn:KStilde}
\overbar{\text{KS}}=\max_{t}\bigl|\vtheta(\Pi \gtheta^{\parallela}\mathds{1}_t)\bigl|,
\end{equation}
where $t$ labels the sets $(-\infty,y_t]\cap\X$ (cf. Section \ref{sec:recovering_GOF}) and $\gtheta^{\parallela}$ is the function defining the statistic in \eqref{eqn:linear_stat}.
The results obtained are shown on the left panel of Fig. \ref{fig3}.  
Not surprisingly, the two bootstrapped distributions are effectively overlapping.  The projected process $\vtheta(\Pi\gtheta^{\parallela}\mathds{1}_t)$,  however, provides substantial computational gain compared to $\vthetahat(\gthetahat^{\parallela}\mathds{1}_t)$. Specifically,  simulating the null distribution of $\overbar{\text{KS}}$ using $100,000$ bootstrap replicates required only 5.51 seconds of (system+user)  CPU time;  approximately 3.3  times faster than simulating that of  $\widehat{\text{KS}}$ and which required 18.38 seconds of CPU \sara{$C$-homogeneous}.

\subsection{Asymptotically distribution-free goodness-of-fit tests via unitary operators}
\label{sec:recovering_distr_free}
The projected bootstrap can reduce the CPU time needed to simulate the null distribution of goodness-of-fit statistics.
At the same time, it is also possible to construct goodness-of-fit tests based on the asymptotically distribution-free transformation of empirical processes introduced in  \citet{khm16}. If we do this, the limiting null distribution of the test statistics based on the transformed process will be known analytically and will not depend on the model we test. 
Such a feature is convenient for testing many different models simultaneously or when analyzing a large amount of data. For example, $X$-ray astronomical missions survey millions of astronomical sources. When analyzing the images and spectra arising from such observations, a common problem is identifying those in which features other than the background are present. In statistical terms, this problem involves testing spectral models on several millions of datasets, which would be unfeasible through case-by-case numerical simulations.
 
Given the prominence of the linear statistics, we consider the  functions
\begin{align*}
\ellthetat(x,z)= \frac{z-\mthetax}{\sqrt{\mthetax}}\mathds{1}_{\{x\in A_t\}}
\end{align*}
in which the sets $A_t$ are members of the scanning family $\mathcal{A}$ and are  chosen so that 
\begin{equation}
\label{eqn:mass}
\mu(A_t)=t.
\end{equation}
The last condition is not necessary but simplifies the notation in what follows. 
Under \eqref{eqn:A2} and if $\theta$ is known, the null distribution of the process $\vtheta(\ellthetat)$ converges weakly to that of a standard Brownian motion. When $\theta$ is estimated via MLE, $\vthetahat(\ellthetat)$ is asymptotically equivalent to $\vtheta(\Pi\ellthetat)$ and its limiting null distribution is that of a projected Brownian motion orthogonal to the score function $s$ of the model we wish to test. Thus, the null distribution of test statistics based on $\vthetahat(\ellthetat)$ will be different for different models and for different values of $\theta$ within the same model.

However, it is possible to construct a linear operator in $\mathcal{L}_2(\mutheta)$, denoted by $U_p$, which maps the process $\vtheta(\Pi\ellthetat)$ into a process in $t$ with standard distribution, unconnected with the model we are testing: the operator $U_p$ will depend on the model, but the distribution of the resulting process will be free from it.  In its spirit, the situation is not unlike the one encountered in the i.i.d. data regime in which the classical \sara{probability integral transform} $t = F(x)$ depends on the hypothetical $F$ but maps $F$-empirical processes into the uniform empirical process, which is free from $F$.

The idea behind the choice of the operator $U_p$ is geometrically clear: many different parametric models with a   $p$-dimensional estimated parameter will asymptotically lead to projected Brownian motions parallel to the corresponding score functions. Therefore, the kernels of these projections, albeit different for different models, will all have the same dimension $p$. But then one can choose one such projection orthogonal to a fixed collection of $p$ orthonormal functions and map particular projections corresponding to particular models into this chosen standard projection.

 Let $r= \{r_j\}_{j=1}^p$ form a fixed orthonormal system in $\mathcal{L}_2(\mutheta)$. Unlike the coordinates of the orthonormalized score function $s=\{s_j\}_{j=1}^p$, this new system will stay the same for many different models.
 Consider the projection of $\ellthetat$ orthogonal to $r$:
$$\Pi_r \ellthetat(x,z)=\ellthetat(x,z) - \sum_{j=1}^p \langle \ellthetat,r_j\rangle r_j(x,z) .$$
Choose $U_p$ to be the unitary operator that maps $r$ into $s$ and, therefore, functions orthogonal to $r$ into functions orthogonal to $s$. Then, the process $\vthetahat (U_p\Pi_r \ellthetat)$ serves our purpose since
$$\vthetahat (U_p\Pi_r \ellthetat) \sim \vtheta (\Pi U_p\Pi_r \ellthetat) = \vtheta (U_p\Pi_r \ellthetat)$$
and the variance of the process on the right-hand side -- and, therefore, its covariance -- depends on $r$ but does not involve $s$:
\begin{equation}\label{two}
\Etheta[v^2_{_{\theta,K}} (U_p\Pi_r \ellthetat) ]  =\| U_p\Pi_r \ellthetat \|^2  = \| \Pi_r \ellthetat \|^2  = \| \ellthetat \|^2  - \sum_{j=1}^p \langle \ellthetat, r_j\rangle^2.
\end{equation}
Thus, the limiting distribution of statistics based on the transformed process $v_{_{\theta,K}}(U_p\Pi_r  \ellthetat)$ will need to be found only once and then used for testing many different models.

\sara{
Furthermore, as formalized in Proposition \ref{prop:collection_bridges}, we can suggest a particular choice of $r$ that allows us to derive the asymptotic null distribution of goodness-of-fit statistics in closed form.
\begin{proposition}
\label{prop:collection_bridges}
Consider a sequence $\{A_{t_j}\}_{j=1}^p$  of sets within the scanning family $\mathcal{A}$ such that $\mu(A_{t_j})=j/p$ and denote  with  $\{B_j\}_{j=1}^p$ the collection of increments
$B_j=A_{t_j}\setminus A_{t_{j-1}}$,
with $B_1=\X$ if $p=1$. 
Choose $r= \{r_j\}_{j=1}^p$ so that:
\begin{equation}
\label{eqn:rj}
\phithetaj (x,z)=\frac{z-\mthetax}{\sqrt{\mthetax}}\sqrt{p}\mathds{1}_{\{x\in B_j\}}.
\end{equation}
Then, if $H_0$ holds, the process  $\vtheta(U_p\Pi_r\ellthetat)$ converges in distribution to a sequence of independent standard Brownian bridges rescaled over the sets $B_j$, i.e., 
\begin{equation}
\label{bridges}
u_j (t) = \frac{1}{\sqrt{p}} u(p(t - t_{j-1})), \; \; t_{j - 1} < t \le t_j,
\end{equation}
with $u(t),t\in[0,1]$ denoting the standard Brownian bridge.
\end{proposition}
\begin{proof}
Since each $U_p\Pi_r\ellthetat$ is centered under $H_0$, $\vtheta(U_p\Pi_r\ellthetat)$ has zero mean. Moreover, split $\Pi_r \ell_t$ as the sum
$$\Pi_r \ell_t = \sum_{j=1}^p \ell_t{\mathds 1}_{\{x\in B_j\}} - \sum_{j=1}^p \langle r_j, \ell_t \rangle r_j = \sum_{j=1}^p \left[\ell_t{\mathds 1}_{\{x\in B_j\}} - \langle r_j, \ell_t \rangle r_j \right ] .$$
In this sum, only the summand for which $t_{j-1}<t\le t_j$ is non-zero. For this summand
$$\|\ell_t{\mathds 1}_{\{x\in B_j\}}\|^2\to t- t_{j-1}, \; \; \text{while} \; \; \; \langle r_j, \ell_t \rangle^2 \to p(t-t_{j-1})^2 $$
and eventually, we obtain
$$ \Etheta [v^2_{_{\theta,K}} (U_p\Pi_r \ellthetat)] \to t- t_{j-1} - p(t-t_{j-1})^2, \;\; \; t_{j - 1} < t \le t_j.$$
When comparing the last expression with the variance of $u_j (t)$,
we find that they are the same. 
\qed
\end{proof}}
\begin{figure}
\centering
\makebox{\includegraphics[width=55mm]{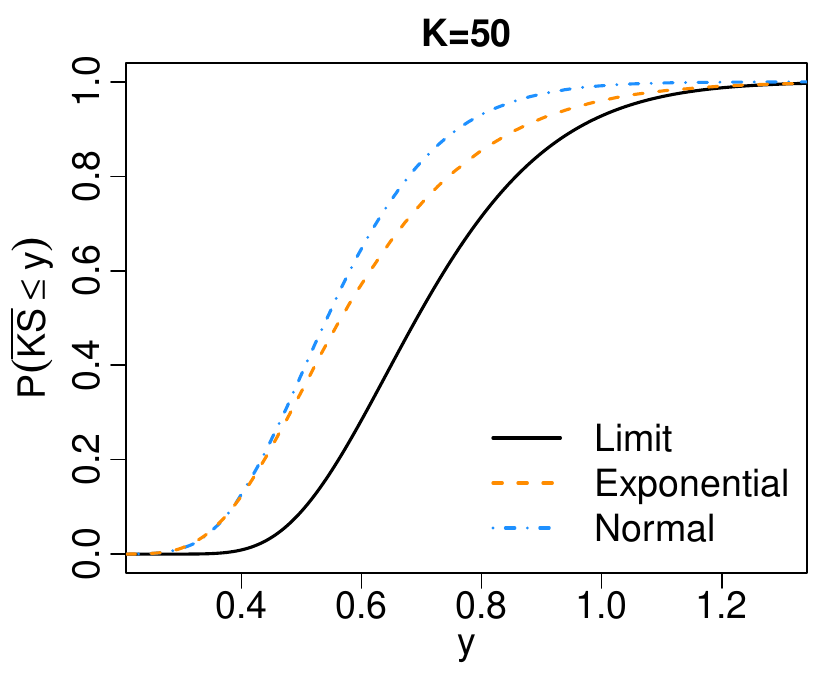}  \includegraphics[width=55mm]{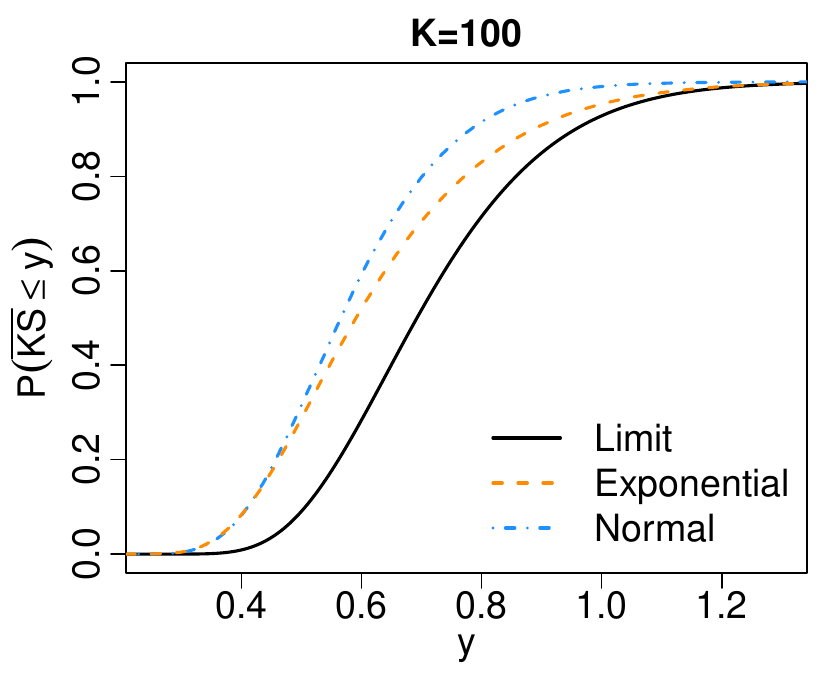}\includegraphics[width=55mm]{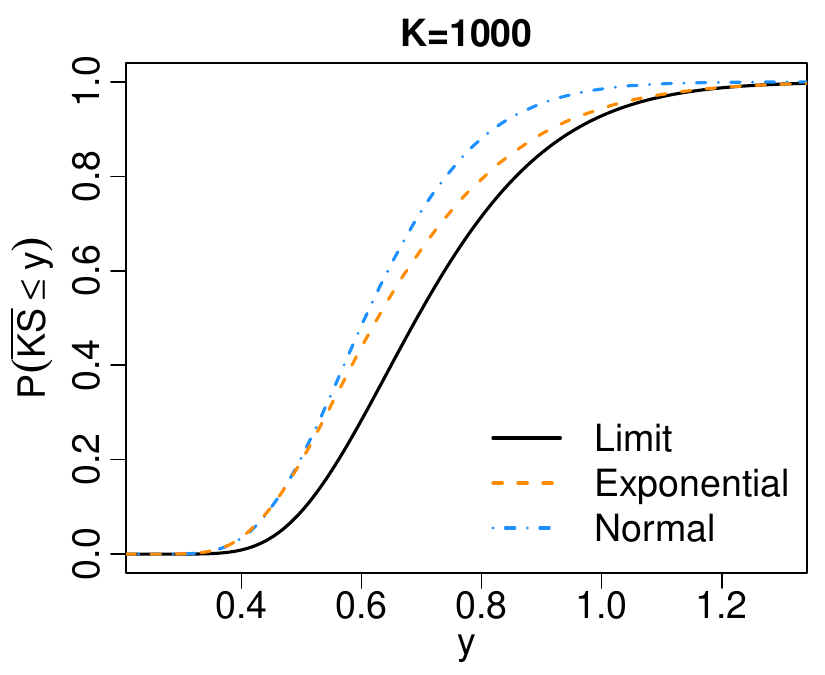}}\\
\makebox{\includegraphics[width=55mm]{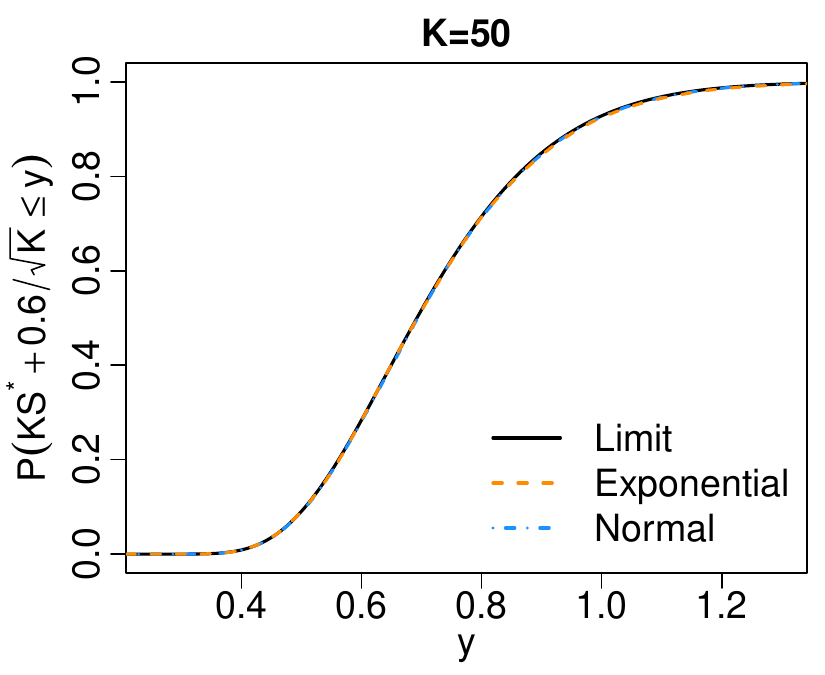}  \includegraphics[width=55mm]{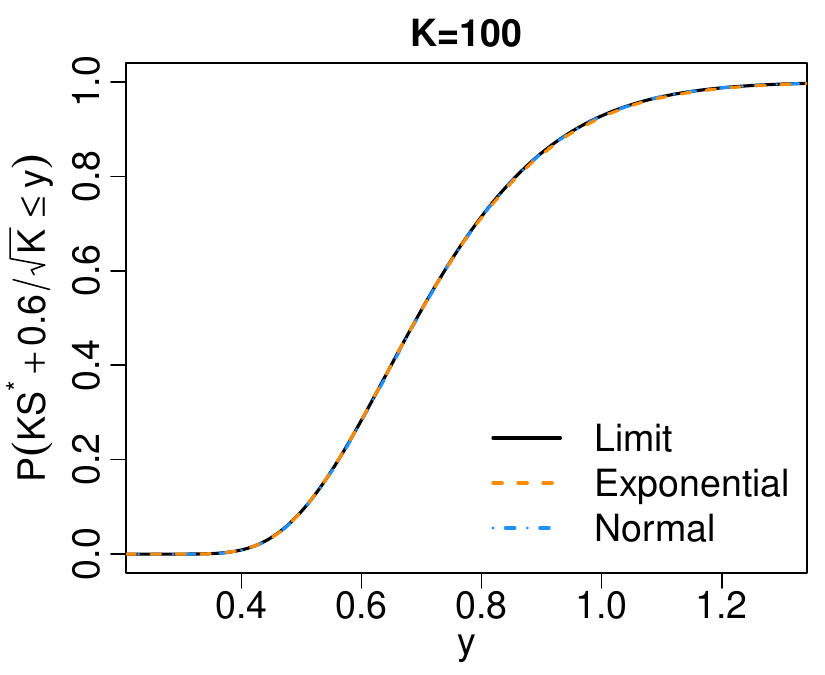}\includegraphics[width=55mm]{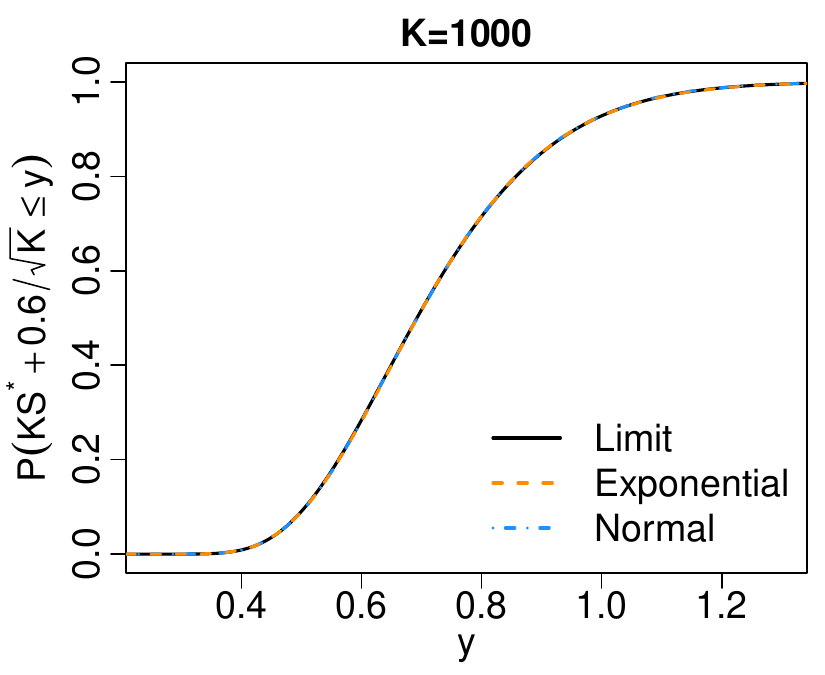}}\\
\caption[Fig. 5]{
Simulated null distributions of the test statistic $\overbar{\text{KS}}$ in \eqref{eqn:KStilde} (top panels) 
and its transformed counterpart $\text{KS}^*$  in \eqref{eqn:KS_star} shifted by $\frac{0.6}{\sqrt{K}}$ (bottom panels), for the exponential (orange dashed lines) and the normal (blue chained lines) models for $K=50,100,1000$. For comparison, the limiting distribution in \eqref{eqn:kolmogorov} is also plotted in both panels (black solid lines).  Each simulation has been conducted using $100,000$ replicates. }
\label{fig5}
\end{figure}

\sara{The right panel of Fig. \ref{fig3} shows one realization of $\vtheta(U_p\Pi_r\ellthetat)$ when $r$ is chosen as in \eqref{eqn:rj} and $p=4$. }
\sara{When $\vtheta(U_p\Pi_r\ellthetat)$ is constructed as in Proposition \ref{prop:collection_bridges}, the asymptotic null distribution of many goodness-of-fit statistics from it is analytically known.} For example, consider 
\begin{equation}
\label{eqn:KS_star}
\text{KS}^*=\max_{t}\bigl|\vtheta( U_p\Pi_r \ellthetat)\bigl|.
\end{equation} 
Since the maximum of a standard Brownian bridge is known to follow the Kolmogorov distribution \citep{kolmogorov}, here denoted with $\mathcal{K}(\cdot)$, we have that, as $K\rightarrow\infty$,
\begin{equation}
\label{eqn:kolmogorov}
P\bigl(\text{KS}^*\leq y|H_0\bigl)\rightarrow P\Bigl(\max_{i}\bigl\{\max_{t}\bigl|\sqrt{p}u_i(t)\bigl|\bigl\}\leq \sqrt{p}y\Bigl)=\Bigl[\mathcal{K}(\sqrt{p}y)\Bigl]^p.
\end{equation}

 The partial sums process, however, is observed over a discretized space; hence, its maximum is smaller than or equal to the maximum of the Brownian bridge, leading to a shift in the distribution. The application of the continuity correction
\begin{equation*}
\label{eqn:kolmogorov_adj}
\Bigl[\mathcal{K}\bigl(\sqrt{p}(y+{\textstyle\frac{0.6}{\sqrt{K}}})\bigl)\Bigl]^p.
\end{equation*}
agrees with the finite-$K$ distribution even for $K\sim 10$. 

It remains to address how to construct the unitary operator $U_p$.
Consider the unitary operator $U_{a,b}$ defined as 
\begin{equation*}
\label{eqn:U}
U_{a,b}=I-\frac{\langle\cdot , a-b\rangle}{1-\langle a,b\rangle} [a(x,z)-b(x,z)]
\end{equation*}
where $a,b\in \mathcal{L}_2(\mutheta)$ and $\|a\|=\|b\|=1$. $U_{a,b}$ maps $a$ into $b$, $b$ into $a$, and leaves functions orthogonal to both $a$ and $b$ unchanged.  
Proceed by constructing a set of auxiliary functions $\phithetatildej$ such that $\widetilde{r}_1={r}_1$ and, for $j>1$,  set
\begin{equation}
\label{eqn:Uj}
\phithetatildej(x,z)=U_{j-1}r_{j}(x,z)\quad\text{with}\quad U_{j}=U_{\widetilde{r}_j,s_j}\dots U_{\widetilde{r}_1,s_1}.\end{equation}
For example, 
\begin{equation*}
\widetilde{r}_{2}\hspace{-0.02cm}(x,z)=U_{\widetilde{r}_1,s_1} r_2\hspace{-0.02cm}(x,z), \quad
\widetilde{r}_{3}\hspace{-0.02cm}(x,z)=U_{\widetilde{r}_2,s_2}U_{\widetilde{r}_1,s_1}r_3\hspace{-0.02cm}(x,z), \quad \text{etc.}
\end{equation*}
Each function $\phithetatildej$ is orthogonal to all $s_{i}$ with $i<j$.

The operator $U_p$, constructed as in \eqref{eqn:Uj},
is a product of unitary operators; hence, it is unitary, and it can be shown that $U_pr_j=s_j$ for all $j=1,\dots,p$.
 Furthermore, given that $\Pi_r$ and $U_p$ are linear operators,  $\vtheta( U_p\Pi_r\ellthetat)$ is a weighted linear statistic for each $t$ fixed; thus, the corresponding  $\vtheta(U_p\Pi_r\ellthetat^\perp)$ component (cf. Section \ref{sec:linear}) is always zero.

\textbf{\emph{Numerical study.}} 
To validate the performance of the approach described in this section in retrieving distribution-freeness, 
  consider the exponential and normal models in Examples I and IV. In both instances, assume $\theta=(c,\beta)^{\sf T}$ to be unknown. Furthermore, to assess how large $K$ needs to be for our asymptotic result to hold, we consider $K=50,100,1000$.  The top panels of Fig. \ref{fig5} show the simulated null distributions of the statistic $\overbar{\text{KS}}$ in \eqref{eqn:KStilde} with $\gparallel\mathds{1}_t=\ellthetat$ and $t\in\{{\scriptstyle 0,\frac{1}{K},\frac{2}{K},\dots,1\}}$.  Due to the absence of distribution-freeness, regardless of how large $K$ is, these null distributions differ substantially when testing the exponential model (orange dashed line) and the normal model (blue chained line). As expected,   they also deviate from the limiting distribution in \eqref{eqn:kolmogorov} (black solid lines).
Conversely, the bottom panels of Fig. \ref{fig5} show that, for all the values of $K$ considered, 
the simulated null distributions of the $\text{KS}^*$ in \eqref{eqn:KS_star}, adjusted by a factor of $\frac{0.6}{\sqrt{K}}$, is the same when testing either the exponential or the normal model. Moreover, both distributions practically coincide with the limit in \eqref{eqn:kolmogorov}.

\section{Summary of the main results and final remarks}
\label{sec:discussion}
This article demonstrates that various divisible statistics, of different flavour and structure, can be redefined as linear functionals of the same empirical process (Section \ref{sec:divisible}), thereby extending the class of divisible statistics and making their asymptotic theory transparent and simple.

The proposed study reveals phenomena with no analogues in the classical theory of empirical processes. 
In particular, the discovery of the existence of the function $C(\cdot;\gtheta)$, which defines the shift of divisible statistics under alternatives (Propositions \ref{prop:magic_form}-\ref{prop:powerMLE}(i)), enabled us to show how to construct, for any divisible statistic, a ``better'' divisible statistic, with higher power. These ``better'' statistics are members of the class of weighted linear statistics (Proposition \ref{prop:linear}). Moreover, although divisible statistics for which  $C(\cdot;\gtheta)$ is constant are commonly employed in many practical applications,  they are unreliable in testing problems (Section \ref{sec:spacehomogeneous}). This fact has also been illustrated through the data from the Chandra $X$-ray Observatory.

For parameter estimation based on weighted divisible statistics, the optimal weighting function was identified (Proposition \ref{prop:optimality}) -- a result of Cram\'er-Rao type. Furthermore, it was confirmed that test statistics with estimated parameters can yield higher power than testing simple hypotheses (Proposition \ref{prop:powerMLE}(ii)). 
 
To derive the distribution of goodness-of-fit statistics based on omnibus functionals of partial sums of divisible statistics, two approaches are offered: a fast bootstrap procedure (Section \ref{sec:bootstrap}) and the construction of distribution-free tests with a known closed-form distribution (Section \ref{sec:recovering_distr_free}).

The theoretical framework introduced  immediately applies to Poisson regression problems, and it naturally lends itself to any other situation in which non-identically distributed data are encountered. It would be interesting to explore extensions of the proposed setup to high-dimensional regression in which the dimension of the parameter space is much larger than the sample size. Such a development could broaden further the set of inferential tools currently available in this context \citep[e.g.,][]{chen2001weak,VandeGeer,jankova}.

\section*{Supplementary Material}
Technical details are available in the Supplementary Material.

\section*{Data availability}
The \texttt{R} code and the data needed to replicate the results presented in this manuscript are available at \url{https://github.com/salgeri/Algeri\_Khmaladze\_grouped_data}

\section*{Acknowledgments}
S.A. is grateful to the School of Mathematics and Statistics at the Victoria University of Wellington for providing the resources and fostering a welcoming environment at the beginning of this project. She also thanks Bob Cousins,     Vinay Kashyap, Knut  Mor{\aa}, and Lawrence Rudnick for the useful discussions on goodness-of-fit problems arising in physics and astronomy. Both authors thank three anonymous referees and the editor for the thoughtful comments and constructive feedback. 
\section*{Funding}
This research is partially supported by the Office of the Vice President for Research at the University of Minnesota,  by the NSF grant DMS-2152746, and the Marsden grant VUW1616.

\bibliographystyle{abbrvnat}
\bibliography{biblio}

@book{rudin87,
  title={Real and complex analysis},
  author={Rudin, W.},
  year={1987},
  publisher={McGraw-Hill}
}

@book{lehmann1986testing,
  title={Testing statistical hypotheses},
  author={Lehmann, E.L. and Romano, J.P. },
  volume={3},
  year={1986},
  publisher={Springer}
}

@article{domanski,
  title={Statistical tests based on empty cells},
  author={Domanski, C.},
  journal={Acta Universitatis Lodziensis},
  number={269},
  year={2012}
}

@book{baayen,
  title={Word frequency distributions},
  author={Baayen, H},
  year={2001},
  publisher={Kluwer}
}

@book{magurran,
  title={Ecological Diversity and its Measurement},
  author={Magurran, A.E.},
  year={1988},
  publisher={Princeton University Press}
}

@article{sauermann,
  title={Bootstrapping the maximum likelihood estimator in high-dimensional log-linear models},
  author={Sauermann, W},
  journal={The Annals of Statistics},
  pages={1198--1216},
  year={1989},
  publisher={JSTOR}
}

@article{simonoff,
  title={Jackknifing and bootstrapping goodness-of-fit statistics in sparse multinomials},
  author={Simonoff, J.S.},
  journal={Journal of the American Statistical Association},
  volume={81},
  number={396},
  pages={1005--1011},
  year={1986},
  publisher={Taylor \& Francis}
}

@article{roberts,
  title={Distribution free testing for the family of Laplace distributions},
  author={Roberts, B. and Haywood, J.n and Swordson, E.},
  journal={Communications in Statistics-Simulation and Computation},
  pages={1--12},
  year={2023},
  publisher={Taylor \& Francis}
}

@article{cai,
  title={Testing high-dimensional multinomials with applications to text analysis},
  author={Cai, T. T. and Ke, Z.T. and Turner, P.},
  journal={Journal of the Royal Statistical Society Series B: Statistical Methodology},
  pages={qkae003},
  year={2024},
  publisher={Oxford University Press}
}

@article{jankova,
  title={Goodness-of-fit testing in high dimensional generalized linear models},
  author={Jankov{\'a}, J. and Shah, R.D. and B{\"u}hlmann, P. and Samworth, R.J.},
  journal={Journal of the Royal Statistical Society Series B: Statistical Methodology},
  volume={82},
  number={3},
  pages={773--795},
  year={2020},
  publisher={Oxford University Press}
}

@article{bonamente,
  title={Distribution of the C statistic with applications to the sample mean of Poisson data},
  author={Bonamente, M.},
  journal={Journal of Applied Statistics},
  volume={47},
  number={11},
  pages={2044--2065},
  year={2020},
  publisher={Taylor \& Francis}
}

@article{vinay,
  title={Long-term X-ray Variability of the Symbiotic System RT Cru based on Chandra Spectroscopy},
  author={Danehkar, A. and Karovska, M. and Drake, J.J. and Kashyap, V.L.},
  journal={Monthly Notices of the Royal Astronomical Society},
  year={2020}
}

@article{luna,
  title={The nature of the hard X-ray-emitting symbiotic star RT Cru},
  author={Luna, G.J.M. and Sokoloski, J.L.},
  journal={The Astrophysical Journal},
  volume={671},
  number={1},
  pages={741},
  year={2007},
  publisher={IOP Publishing}
}

@article{zhang,
  title={A novel approach to detect line emission under high background in high-resolution X-ray spectra},
  author={Zhang, X. and Algeri, S. and Kashyap, V. and Karovska, M.},
  journal={Monthly Notices of the Royal Astronomical Society},
  volume={521},
  number={1},
  pages={969--983},
  year={2023},
  publisher={Oxford University Press}
}

@article{dumbgen,
  title={A new approach to tests and confidence bands for distribution functions},
  author={D{\"u}mbgen, L. and Wellner, J.A.},
  journal={The Annals of Statistics},
  volume={51},
  number={1},
  pages={260--289},
  year={2023},
  publisher={Institute of Mathematical Statistics}
}

@article{smirnov37,
  title={On the distribution of the von {M}ises $\omega^2$-criterion},
  author={Smirnov, N.},
  journal={Matematicheskii Sbornik},
  volume={5},
  pages={973--993},
  year={1937}
}

@article{karovska2010,
  title={A Precessing Jet in the CH Cyg Symbiotic System},
  author={Karovska, M. and Gaetz, T.J. and Carilli, C.L. and Hack, W. and Raymond, J.C. and Lee, N.P.},
  journal={The Astrophysical Journal Letters},
  volume={710},
  number={2},
  pages={L132},
  year={2010},
  publisher={IOP Publishing}
}

@article{belforte,
  title={Search for resonant and nonresonant new phenomena in high-mass dilepton final states at {$\sqrt{s}= 13$} {TeV}},
  author={Belfonte, S. and others},
  journal={Journal of High Energy Physics},
  volume={2021},
  number={07},
  pages={1--62},
  year={2021}
}

@article{khm2007,
  title={Differentiation of sets in measure},
  author={Khmaladze, E.V.},
  journal={Journal of Mathematical Analysis and Applications},
  volume={334},
  number={2},
  pages={1055--1072},
  year={2007},
  publisher={Elsevier}
}

@article{khm2008,
  title={Local empirical processes near boundaries of convex bodies},
  author={Khmaladze, E. and Weil, W.},
  journal={Annals of the Institute of Statistical Mathematics},
  volume={60},
  number={4},
  pages={813--842},
  year={2008},
  publisher={Springer}
}

@article{broken_pl1,
  title={The ultraluminous state},
  author={Gladstone, J. C. and Roberts, T.P. and Done, C.},
  journal={Monthly Notices of the Royal Astronomical Society},
  volume={397},
  number={4},
  pages={1836--1851},
  year={2009},
  publisher={Blackwell Publishing Ltd Oxford, UK}
}

@article{broken_pl2,
  title={A {C}handra view of the normal {S0} galaxy {NGC} 1332. {I}. An unbroken, steep power-law luminosity function for the low-mass {X}-ray binary population},
  author={Humphrey, P.J. and Buote, D.A.},
  journal={The Astrophysical Journal},
  volume={612},
  number={2},
  pages={848},
  year={2004},
  publisher={IOP Publishing}
}

@article{broken_pl3,
  title={Spectral state transitions of the Ultraluminous {X}-ray Source {IC} 342 {X}-1},
  author={Marlowe, Hannah and Kaaret, Philip and Lang, Cornelia and Feng, Hua and Grise, Fabien and Miller, Neal and Cseh, David and Corbel, Stephane and Mushotzky, Richard F},
  journal={Monthly Notices of the Royal Astronomical Society},
  volume={444},
  number={1},
  pages={642--650},
  year={2014},
  publisher={Oxford University Press}
}

@book{cramer,
  title={Mathematical methods of statistics},
  author={Cram{\'e}r, H.},
  year={1999},
  publisher={Princeton University Press}
}

@article{smirnov,
  title={On the estimation of the discrepancy between empirical curves of distribution for two independent samples},
  author={Smirnov, N.V.},
  journal={Moscow University Mathematics Bulletin},
  volume={2},
  number={2},
  pages={3--14},
  year={1939}
}

@article{kolmogorov,
  title={Sulla determinazione empirica di una lgge di distribuzione},
  author={Kolmogorov, A.},
  journal={Giornale dell'Instituto Italiano degli Attuari},
  volume={4},
  pages={83--91},
  year={1933}
}

@article{khm16,
  title={Unitary transformations, empirical processes and distribution free testing},
  author={Khmaladze, E.},
  journal={Bernoulli},
  volume={22},
  number={1},
  pages={563--588},
  year={2016},
  publisher={Bernoulli Society for Mathematical Statistics and Probability}
}

@article{VandeGeer,
author = {van de Geer, S. and B{\"u}hlmann, P. and  Ritov, Y. and  Dezeure, R.},
  title = {{On asymptotically optimal confidence regions and tests for high-dimensional models}},
volume = {42},
journal = {The Annals of Statistics},
number = {3},
publisher = {Institute of Mathematical Statistics},
pages = {1166 -- 1202},
keywords = {central limit theorem, generalized linear model, Lasso, linear model, multiple testing, Semiparametric efficiency, Sparsity},
year = {2014}
}

@article{chen2001weak,
  title={Weak convergence of the empirical process of residuals in linear models with many parameters},
  author={Chen, G. and Lockhart, R.A.},
  journal={Annals of Statistics},
  pages={748--762},
  year={2001},
  publisher={JSTOR}
}

@article{henze,
  title={Recent and classical tests for exponentiality: a partial review with comparisons},
  author={Henze, N. and Meintanis, S.G.},
  journal={Metrika},
  volume={61},
  pages={29--45},
  year={2005},
  publisher={Springer}
}

@article{moscovich,
author = {Moscovich, A. and  Nadler, B. and  Spiegelman,C.},
title = {{On the exact Berk-Jones statistics and their $p$-value calculation}},
volume = {10},
journal = {Electronic Journal of Statistics},
number = {2},
publisher = {Institute of Mathematical Statistics and Bernoulli Society},
pages = {2329 -- 2354},
keywords = {Continuous goodness-of-fit, Hypothesis testing, p-value computation, Rare-weak model},
year = {2016},
doi = {10.1214/16-EJS1172}
}

@article{durio,
  title={Local efficiency of integrated goodness-of-fit tests under skew alternatives},
  author={Durio, Alessandra and Nikitin, Ya Yu},
  journal={Statistics \& Probability Letters},
  volume={117},
  pages={136--143},
  year={2016},
  publisher={Elsevier}
}

@article{anderson,
  title={A test of goodness of fit},
  author={Anderson, T.W. and Darling, D.A.},
  journal={Journal of the American Statistical Association},
  volume={49},
  number={268},
  pages={765--769},
  year={1954},
  publisher={Taylor \& Francis}
}

@book{vandervaart,
  title={Asymptotic Statistics},
  author={van der Vaart, A.W.},
  volume={3},
  year={2000},
  publisher={Cambridge University Press}
}

@article{barbour,
  title={Small counts in the infinite occupancy scheme},
  author={Barbour, A. and Gnedin, A.},
  journal={Electronic Journal of Probability},
  volume={14},
  pages={365--384},
  year={2009},
  publisher={Institute of Mathematical Statistics and Bernoulli Society}
}

@article{lecam,
  title={Locally asymptotically normal families of distributions},
  author={Le Cam, L.},
  journal={University of California Publications in Statistics},
  volume={3},
  pages={37--98},
  year={1960}
}

@article{GneHanPit07,
  title={Notes on the occupancy problem with infinitely many boxes: general asymptotics and power laws},
  author={Gnedin, A. and Hansen, B. and Pitman, J.},
journal={Probability Surveys},
volume = {4},
pages = {146 -- 171},
  year={2007}
}

@article{khm11,
  title={Convergence properties in certain occupancy problems including the {K}arlin-{R}ouault law},
  author={Khmaladze, E.V.},
  journal={Journal of Applied Probability},
  volume={48},
  number={4},
  pages={1095--1113},
  year={2011},
  publisher={Cambridge University Press}
}

@article{mnatsakanov86,
  title={A functional limit theorem for additively separable statistics in the case of very rare events},
  author={Mnatsakanov, R.M.},
  journal={Theory of Probability and Its Applications},
  volume={30},
  number={3},
  pages={622--626},
  year={1986},
  publisher={SIAM}
}

@article{mnatsakanov88,
  title={On the convergence of separable statistics to a {W}iener process},
  author={Mnatsakanov, R.M.},
  journal={Theory of Probability and Its Applications},
  volume={32},
  number={1},
  pages={152--157},
  year={1988},
  publisher={SIAM}
}

@article{medvedev70,
  title={Some theorems on the asymptotic distribution of the $\chi^2$ statistic},
  author={Medvedev, Yu.},
  journal={Doklady Akademii Nauk},
  volume={192},
  number={5},
  pages={987--989},
  year={1970},
  publisher={Russian Academy of Sciences}
}

@book{amari,
  title={Differential-Geometrical Methods in Statistics},
  author={Amari, S.-i.},
  year={1985},
  publisher={Lecture Notes in Statistics. Springer}
}

@article{mirakhmedov,
  title={Asymptotic normality associated with generalized occupancy problems},
  author={Mirakhmedov, S.M.},
  journal={Statistics and Probability Letters},
  volume={77},
  number={15},
  pages={1549--1558},
  year={2007},
  publisher={Elsevier}
}

@article{cressie_read,
  title={Multinomial goodness-of-fit tests},
  author={Cressie, N. and Read, T.R.C.},
  journal={Journal of the Royal Statistical Society: Series B (Methodological)},
  volume={46},
  number={3},
  pages={440--464},
  year={1984},
  publisher={Wiley Online Library}
}

@book{kass2011,
  title={Geometrical Foundations of Asymptotic Inference},
  author={Kass, R.E. and Vos, P.W.},
  year={2011},
  publisher={John Wiley \& Sons}
}

@article{khm93,
  title={Goodness of fit problem and scanning innovation martingales},
  author={Khmaladze, E.V.},
  journal={The Annals of Statistics},
volume={21},
number={2},
  pages={798--829},
  year={1993},
  publisher={JSTOR}
}

@article{oosterhoff,
  title={A note on contiguity and {H}ellinger distance},
  author={Oosterhoff, J. and Van Zwet, W.R.},
  journal={Reidel, Dordrecht },
  volume={157},
  pages={166},
  year={1979}
}

@article{chibisov65,
  title={An investigation of the asymptotic power of the tests of fit},
  author={Chibisov, D.M.},
  journal={Theory of Probability and Its Applications},
  volume={10},
  number={3},
  pages={421--437},
  year={1965},
  publisher={SIAM}
}

@article{ivchenko79,
  title={Separable statistics and hypothesis testing. The case of small samples},
  author={Ivchenko, G.I. and Medvedev, Yu.},
  journal={Theory of Probability and Its Applications},
  volume={23},
  number={4},
  pages={764--775},
  year={1979},
  publisher={SIAM}
}

@article{ivchenko81,
  title={Decomposable statistics and hypothesis testing for grouped data},
  author={Ivchenko, G.I. and Medvedev, Yu.},
  journal={Theory of Probability and Its Applications},
  volume={25},
  number={3},
  pages={540--551},
  year={1981},
  publisher={SIAM}
}

@article{khm98,
  title={Goodness of fit tests for “chimeric” alternatives},
  author={Khmaladze, E.V.},
  journal={Statistica Neerlandica},
  volume={52},
  number={1},
  pages={90--111},
  year={1998},
  publisher={Wiley Online Library}
}

@article{wellnerrev,
  title={Empirical processes in action: a review},
  author={Wellner, J.A.},
  journal={International Statistical Review},
volume={60},
  number={3},
  pages={247--269},
  year={1992},
  publisher={JSTOR}
}

@article{atwood,
  author={Atwood et al.,W. B. },
  title={The {L}arge {A}rea {T}elescope on the {F}ermi {G}amma-{R}ay {S}pace {T}elescope {M}ission},
  journal={The Astrophysical Journal},
  volume={697},
  number={2},
  pages={1071},
  year={2009}
}

@book{stigler,
  title={The History of Statistics},
  author={Stigler, S.M.},
  year={1990},
  publisher={Harvard University Press}
}

@article{cochran,
  title={The $\chi^2$ test of goodness of fit},
  author={Cochran, W.G.},
  journal={The Annals of Mathematical Statistics},
  volume={23},
  number={3},
  pages={315--345},
  year={1952},
  publisher={JSTOR}
}

@article{balakrishnan1,
  title={Hypothesis testing for densities and high-dimensional multinomials},
  author={Balakrishnan, S. and Wasserman, L.},
  journal={The Annals of Statistics},
  volume={47},
  number={4},
  pages={1893--1927},
  year={2019},
  publisher={JSTOR}
}

@article{balakrishnan2,
author = {Balakrishnan, S. and Wasserman, L.},
title = {{Hypothesis testing for high-dimensional multinomials: A selective review}},
volume = {12},
journal = {The Annals of Applied Statistics},
number = {2},
publisher = {Institute of Mathematical Statistics},
pages = {727 -- 749},
keywords = {high-dimensional multinomials, Hypothesis testing},
year = {2018}
}

@article{muller03,
  title={Asymptotic normality of goodness-of-fit statistics for sparse {P}oisson data},
  author={M{\"u}ller, U.U. and Osius, G.},
  journal={Statistics: A Journal of Theoretical and Applied Statistics},
  volume={37},
  number={2},
  pages={119--143},
  year={2003},
  publisher={Taylor \& Francis}
}

@article{cash,
  title={Parameter estimation in astronomy through application of the likelihood ratio},
  author={Cash, W.},
  journal={Astrophysical Journal},
  volume={228},
  pages={939--947},
  year={1979}
}

@article{pyke83,
  title={A uniform central limit theorem for partial-sum processes indexed by sets},
  author={Pyke, R.},
  journal={London Mathematical Society Lecture Notes Series},
  volume={79},
  pages={219--240},
  year={1983}
}

@article{pyke86,
  title={A uniform central limit theorem for set-indexed partial-sum processes with finite variance},
  author={Kenneth, A.S. and Pyke, R.},
  journal={The Annals of Probability},
  volume={14},
  number={2},
  pages={582--597},
  year={1986},
  publisher={Institute of Mathematical Statistics}
}

@article{cressie,
  title={Pearson's {$X^2$} and the loglikelihood ratio statistic {$G^2$}: A comparative review},
  author={Cressie, N. and Read, T.R.C.},
  journal={International Statistical Review},
 volume={57},
  number={1},
  pages={19--43},
  year={1989},
  publisher={JSTOR}
}

@article{algeri22,
  title = {K-2 rotated goodness-of-fit for multivariate data},
  author = {Algeri, S.},
  journal = {Physical Review D},
  volume = {105},
  issue = {3},
  pages = {035030},
  numpages = {10},
  year = {2022} 
}

@article{khm84,
  title={Martingale limit theorems for divisible statistics},
  author={Khmaladze, E.V.},
  journal={Theory of Probability and Its Applications},
  volume={28},
  number={3},
  pages={530--548},
  year={1984},
  publisher={SIAM}
}

@article{pearson1900,
  title={On the criterion that a given system of deviations from the probable in the case of a correlated system of variables is such that it can be reasonably supposed to have arisen from random sampling},
  author={Pearson, K.},
  journal={The London, Edinburgh, and Dublin Philosophical Magazine and Journal of Science},
  volume={50},
  number={302},
  pages={157--175},
  year={1900},
  publisher={Taylor \& Francis}
}


\end{document}